\providecommand{\E}{\mathbf{E}}
\providecommand{\rot}{\intercal}
\newcommand\restr[2]{{
		\left.\kern-\nulldelimiterspace 
		#1 
		\vphantom{\big|} 
		\right|_{#2} 
}}
\def\BState{\State\hskip-\ALG@thistlm}
\numberwithin{equation}{section}
\newtheorem{theorem}{Theorem} 
\newtheorem{lem}[theorem]{Lemma}
\newtheorem{defn}[theorem]{Definition}
\newcommand{\vol}{\textup{vol}}
\newcommand{\out}{\textup{out}}
\newcommand{\ino}{\textup{in}}
\newcommand{\ds}{\displaystyle}
\author{}
\icmltitlerunning{Online Sparsification of  Bipartite-Like Clusters in Graphs}
\begin{document}

\twocolumn[
\icmltitle{Online Sparsification of  Bipartite-Like Clusters in Graphs}

\begin{icmlauthorlist}
\icmlauthor{Joyentanuj Das}{ed}
\icmlauthor{Suranjan De}{ed}
\icmlauthor{He Sun}{ed}
\end{icmlauthorlist}

\icmlaffiliation{ed}{School of Informatics, University of Edinburgh, Edinburgh,  United Kingdom}

\icmlcorrespondingauthor{He Sun}{h.sun@ed.ac.uk}

\icmlkeywords{Machine Learning, ICML}

\vskip 0.3in
]



\printAffiliationsAndNotice{}  

\begin{abstract}

  Graph clustering is an important algorithmic technique for analysing massive graphs, and has been widely applied in many research fields of data science. While the objective of most   graph clustering algorithms is to find a vertex set of low conductance,   a sequence of recent studies   highlights the importance of the inter-connection between vertex  sets when analysing real-world datasets. Following this line of research, in this work we study bipartite-like clusters and present efficient and  online sparsification algorithms that  find such clusters in both undirected graphs and directed ones. We   conduct experimental studies on both synthetic and real-world datasets, and show that our algorithms significantly speedup the running time of existing clustering algorithms while preserving their effectiveness.
\end{abstract}

\section{Introduction}
 Graph clustering is a fundamental technique in data analysis with wide-ranging applications in machine learning and data science. A classical graph clustering problem involves partitioning the vertices of a graph into sets of highly connected vertices to minimize the normalised cut value. However, many real-world clustering tasks are defined by alternative objective functions, tailored to the specific needs and constraints of the problem at hand. One such example involves uncovering the vertex sets  that are densely connected to each other, and these two vertex sets form a bipartite-like graph.  For example, when representing the migration or trade datasets with a graph, a bipartite-like cluster   captures a typical pattern of  regional migration or trade~\citep{CLSZ,LaenenS20,HRC22}, and the significance of bipartite-like clusters is  further  highlighted when studying  many other   real-world datasets~\citep{Bennett,chain_network}.

\paragraph{Our Results.}   
We first  study bipartite-like clusters   in   undirected graphs, and present an algorithm that sparsifies an undirected graph while preserving its structure of bipartite-like clusters. Our algorithm can be implemented  online, and directly applied to speed 
  up the running time of   existing algorithms that find   bipartite-like clusters. Formally speaking, for an   undirected  $G=(V,E)$ and 
 a pair of disjoint and non-empty  subsets $A,B \subset V$, we define $$\overline{\phi}_G(A,B) \triangleq \frac{2 w_G(A,B)}{\vol_G(A \cup B)},$$
 where $$
 w_G(A, B) \triangleq \sum_{\substack{ \{u,  v\}\in E \\ u\in A, v\in B }} w(u,v)$$ is the cut value between $A$ and $B$ and $\vol_G(A\cup B)$ is the volume of $A\cup B$  defined by $$\vol_G(A\cup B) = \sum_{\substack{ \{u, v\}\in E \\ u\in A\cup B}} w_G(u,v).$$ Notice that  a high value of $\overline{\phi}_G(A,B)$ implies that most edges adjacent to the vertices in $A\cup B$ are between $A$ and $B$, and $A$ and $B$ form a bipartite-like cluster. 
Generalising this to  multiple clusters, for every $k \in \mathbb{N}$ we define the $k$-way dual Cheeger constant  by
	\begin{equation}
		\bar{\rho}_G(k) \triangleq \max_{(A_1,B_1),\ldots,(A_k,B_k)} \min_{1 \le i \le k} \overline{\phi}_G (A_i,B_i),
	\end{equation}
	where the maximum is taken over
	all the possible  $k$ pairs of subsets $(A_1,B_1),\ldots,(A_k,B_k)$ satisfying $A_i \cap A_j=\emptyset, B_i \cap B_j=\emptyset, A_i \cap B_j = \emptyset$
	for different $i,j\in[k]$, and 
	$A_i\cup B_i \ne \emptyset$
	for different $i,j\in[k]$. Notice that a high value of $\bar{\rho}_G(k)$ implies that $G$ contains $k$ bipartite-like clusters, in each of which the vertex sets $A_i$ and $B_i$ are densely connected to each other.      
 We prove that, when $G$ 
 presents a clear structure of   $k$ bipartite-like clusters,  this structure can be represented by a sparse subgraph $G^*$ of $G$ with $\widetilde{O}(n)$ edges, and $G^*$ can be constructed online in nearly-linear time\footnote{We say that a graph algorithm runs in nearly-linear time if the algorithm's running time is $O(m \cdot\mathrm{poly}\log n)$, where $m$ and $n$ are the number of edges and vertices of the input graph. For simplicity, we use $\widetilde{O}(\cdot)$ to hide a poly-logarithmic factor of $n$.}. Our result is as follows: 
 
 \begin{theorem}[Result for undirected graphs]\label{thm:undirected_main} 
 Let $G = (V_{G},E_{G},w_{G})$ be an undirected and weighted graph of $m$ edges, and assume that $G$ contains $k$   bipartite-like clusters  $(A_1,B_1), \ldots, (A_k, B_k)$ corresponding to  $\bar{\rho}_{G}(k)$. Then, there is an algorithm that runs in  $\widetilde{O}(m)$ time and 
 computes a sparsifier  $G^* = (V_{G},F \subset E_{G}, \widetilde{w})$,  such that    these $k$ bipartite-like clusters are preserved in $G^*$ with high probability. That is, it holds with high probability that $\bar{\rho}_{G^*}(k) = \Omega\left(\bar{\rho}_{G}(k)\right)$, and $G^*$  contains only $k$ bipartite-like clusters. 
	\end{theorem}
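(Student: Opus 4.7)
The strategy is to reduce the task to preserving the top end of the normalized Laplacian spectrum, since bipartite-like clusters are encoded by eigenvalues of $\mathcal{L}_G$ close to $2$. A spectral sparsifier preserves the quadratic form of $L_G$ and hence, after a degree correction, all eigenvalues of $\mathcal{L}_G$; combined with a higher-order dual Cheeger inequality this should translate directly into preservation of $\bar{\rho}_G(k)$.

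The concrete steps are as follows. First, I would invoke a higher-order dual Cheeger inequality (in the style of the results of Liu and of Laenen--Sun) to convert the hypothesised lower bound on $\bar{\rho}_G(k)$ into an upper bound on $2-\lambda_{n-k+1}(\mathcal{L}_G)$. Next, I would run an online spectral sparsification routine based on online leverage-score (equivalently effective-resistance) sampling, in the style of Kapralov, Lee, Musco, Musco, and Sidford, which processes the edges of $G$ in a single pass in $\widetilde{O}(m)$ total time and outputs a reweighted subgraph $G^{*}$ with $\widetilde{O}(n)$ edges satisfying
\[
(1-\varepsilon)\, L_G \preceq L_{G^{*}} \preceq (1+\varepsilon)\, L_G
\]
with high probability. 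Evaluating this inequality on standard basis vectors shows that all vertex degrees are preserved to within $(1\pm\varepsilon)$, which upgrades the approximation to the normalized Laplacian $\mathcal{L}$. Applying the reverse direction of the dual Cheeger inequality inside $G^{*}$ then yields $\bar{\rho}_{G^{*}}(k) = \Omega\!\bigl(\bar{\rho}_G(k)\bigr)$. An identical argument applied to $\lambda_{n-k}(\mathcal{L}_{G^{*}})$ guarantees that $G^{*}$ does not develop a spurious $(k{+}1)$-th bipartite-like cluster.

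The main technical obstacle is the lossy nature of the higher-order dual Cheeger inequality: its two directions differ by factors polynomial in $k$, so naively chaining them around a multiplicative $(1\pm\varepsilon)$ spectral sparsifier does not directly produce a clean $\Omega(\bar{\rho}_G(k))$ conclusion. I would circumvent this by working in the additive parametrisation $2-\lambda_{n-k+1}$, which is genuinely small under the hypothesis, and choosing $\varepsilon$ as a sufficiently small constant so that the additive gap survives the sparsification. A conceptually cleaner alternative is to pass through the bipartite double cover $\widetilde{G}$ of $G$: the bipartite-like pairs $(A_i,B_i)$ of $G$ lift to ordinary low-conductance sets $A_i \cup B_i'$ in $\widetilde{G}$, and sparsifying $\widetilde{G}$ online and projecting the resulting edge weights back to $G$ lets one invoke the standard higher-order Cheeger inequality in place of its dual. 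The remaining work is largely bookkeeping, in particular checking that the degree-correction and additive-to-multiplicative conversion are carried out with high probability jointly with the matrix-concentration bound underlying the online sparsifier.
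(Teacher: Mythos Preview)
Your plan has a genuine gap. The route through the dual Cheeger inequality cannot deliver $\bar{\rho}_{G^*}(k)=\Omega(\bar{\rho}_G(k))$ under the paper's hypothesis, which is only $\bar{\rho}_G(k)\ge 1/\log n$. The forward direction gives $2-\lambda_{n-k+1}(\mathcal{L}_G)\le 2(1-\bar{\rho}_G(k))$, and for $\bar{\rho}_G(k)=1/\log n$ this is essentially $2$; so your premise that ``$2-\lambda_{n-k+1}$ is genuinely small under the hypothesis'' is false, and the reverse inequality $1-\bar{\rho}_{G^*}(k)\le Ck^3\sqrt{2-\lambda_{n-k+1}}$ is vacuous. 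There is a second, independent problem: a spectral sparsifier of $L_G$ controls $\lambda_i(\mathcal{L}_G)$ multiplicatively, not $2-\lambda_i(\mathcal{L}_G)$. If $\lambda_{n-k+1}(\mathcal{L}_G)=2-\delta$, the guarantee $(1-\varepsilon)L_G\preceq L_{G^*}$ only yields $\lambda_{n-k+1}(\mathcal{L}_{G^*})\ge (1-O(\varepsilon))(2-\delta)$, so $2-\lambda_{n-k+1}(\mathcal{L}_{G^*})$ can be $\Theta(\varepsilon)$ no matter how small $\delta$ is; choosing $\varepsilon$ to be a small constant does not fix this. The double-cover alternative does not save you either: projecting a spectral sparsifier of $\widetilde{G}$ back to $G$ does not yield a graph whose double cover is $\widetilde{G}^*$ (the two lifted copies of an edge get independent weights), so you cannot read off $\bar\phi_{G^*}(A_i,B_i)$ from $\phi_{\widetilde{G}^*}(C_i)$ without a further argument that is well beyond bookkeeping. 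More fundamentally, a spectral sparsifier preserves boundary cuts $w(S,\bar S)$, not cuts $w(A,B)$ between two disjoint sets, which is exactly the quantity entering $\bar\phi$.

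The paper takes a direct, elementary path that never round-trips through Cheeger. It samples each edge $e=\{u,v\}$ with probability proportional to $w(u,v)\bigl(1/d_G(u)+1/d_G(v)\bigr)$ --- pure degree-based oversampling, with no effective resistances or Laplacian solvers --- and then proves two things separately: (i) for each optimal pair $(A_i,B_i)$, the cut $w_{G^*}(A_i,B_i)$ concentrates around $w_G(A_i,B_i)$ by a second-moment (Chebyshev) bound, and (ii) every degree $d_{G^*}(u)$ concentrates around $d_G(u)$ by Bernstein. Combining (i) and (ii) gives $\bar\phi_{G^*}(A_i,B_i)=\Omega\bigl(\bar\phi_G(A_i,B_i)\bigr)$ directly, hence $\bar\rho_{G^*}(k)=\Omega\bigl(\bar\rho_G(k)\bigr)$, with no loss from any Cheeger-type inequality. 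The ``only $k$ clusters'' part is handled separately by a matrix-Chernoff argument applied to the \emph{signless} normalised Laplacian $\mathcal{J}_G=I+D_G^{-1/2}A_GD_G^{-1/2}$, giving $\lambda_{k+1}(\mathcal{J}_{G^*})=\Theta\bigl(\lambda_{k+1}(\mathcal{J}_G)\bigr)$. The key idea you are missing is that $w(A_i,B_i)$ has to be preserved on its own, not inferred from eigenvalues.
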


Secondly, we study the bipartite-like clusters    in directed graphs. Let   $\overrightarrow{G} = (V_{\overrightarrow{G}},E_{\overrightarrow{G}},w_{\overrightarrow{G}})$ be a   digraph with weight function $w_{\overrightarrow{G}}:E_{\overrightarrow{G}} \rightarrow\mathbb{R}_{\geq 0}$. For any vertex $u \in V_{\overrightarrow{G}}$, we use $\deg_{\out}(u)\triangleq \sum_{(u, v)\in E } w_G(u,v)$ and $\deg_{\ino}(u)\triangleq \sum_{(v, u)\in E } w_G(v,u)$ to express the sum of weights of directed edges with $u$ as the tail or the head, respectively. For any $S \subset V_{\overrightarrow{G}}$, we define $\vol_{\out}(S) \triangleq \sum_{u \in S} \deg_{\out}(u)$ and $\vol_{\ino}(S) \triangleq \sum_{u \in S} \deg_{\ino}(u)$.  For any two disjoint subsets $A,B\subset V_{\overrightarrow{G}}$,
we   define $\overline{\phi}_{\overrightarrow{G}}(A,B)$ by
	\begin{equation}\label{eqn:dirphi}
		\overline{\phi}_{\overrightarrow{G}}(A,B) \triangleq \frac{2{w}_{\overrightarrow{G}}(A,B)}{\vol_{\out}(A) + \vol_{\ino}(B)},
	\end{equation}
 where $${w}_{\overrightarrow{G}}(A,B)\triangleq \sum_{\substack{ (u,  v) \in E \\ u\in A, v\in B}} w(u,v)$$ is the sum of the weights of the edges from $A$ to $B$. 
For every $k \in \mathbb{N}$, the $k$-way directed dual Cheeger constant is defined by
	\begin{equation}
		\bar{\rho}_{\overrightarrow{G}}(k) \triangleq \max_{(A_1,B_1),\ldots,(A_k,B_k)} \min_{1 \le i \le k} \overline{\phi}_{\overrightarrow{G}} (A_i,B_i),
	\end{equation}
where the maximum is taken over	all the possible  $k$ pairs of subsets $(A_1,B_1),\ldots,(A_k,B_k)$ satisfying 	$	A_i \cap A_j=\emptyset, B_i \cap B_j=\emptyset, A_i \cap B_j = \emptyset	$
	for different $i,j\in[k]$, $A_i\cup B_i \ne \emptyset$ for any $i\in[k]$.  By definition, a high value of   $\bar{\rho}_{\overrightarrow{G}}(k)$ implies that   $\overrightarrow{G}$ contains $k$ bipartite-like clusters $(A_1,B_1),\ldots,(A_k,B_k)$ such that   most  edges with their tails in $A_i$ have their head in $B_i$ and conversely most edges with their head in $B_i$ have their tail in $A_i$.  We prove that, when $\overrightarrow{G}$  presents a    structure of  $k$ bipartite-like clusters with respect to  $\bar{\rho}_{\overrightarrow{G}}(k)$, this structure  can be represented by a sparse    graph $\overrightarrow{G^*}$ with $\widetilde{O}(n)$ edges, and   $\overrightarrow{G^*}$ can be constructed online in nearly-linear time:

 \begin{theorem}[Result for directed graphs]\label{thm:digraph} 
 Let $\overrightarrow{G} = (V_{\overrightarrow{G}},E_{\overrightarrow{G}},w_{\overrightarrow{G}})$ be a directed and weighted graph of $m$ edges, and assume that $\overrightarrow{G}$ contains $k$ directly bipartite-like   clusters $(A_1,B_1), \ldots, (A_k, B_k)$  with respect to  $\bar{\rho}_{\overrightarrow{G}}(k)$. Then, there is an algorithm that runs in  $\widetilde{O}(m)$ time and 
 computes a sparsifier  $\overrightarrow{G^*} = (V_{\overrightarrow{G}},F \subset E_{\overrightarrow{G}}, \widetilde{w})$,  such that      these $k$ directed bipartite-like clusters   of $\overrightarrow{G}$ are preserved in $\overrightarrow{G^*}$ with high probability. That is, it holds with high probability that $\bar{\rho}_{\overrightarrow{G^*}}(k) = \Omega\left(\bar{\rho}_{\overrightarrow{G}}(k)\right)$, and $\overrightarrow{G^*}$ only contains $k$ directed bipartite-like clusters. 
	\end{theorem}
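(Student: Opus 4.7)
My plan is to reduce Theorem~\ref{thm:digraph} to Theorem~\ref{thm:undirected_main} via a bipartite double-cover construction. Given $\overrightarrow{G} = (V_{\overrightarrow{G}}, E_{\overrightarrow{G}}, w_{\overrightarrow{G}})$, I build an undirected bipartite graph $H$ on vertex set $V_\out \cup V_\ino$ (two disjoint copies of every vertex of $\overrightarrow{G}$) containing an undirected edge $\{u_\out, v_\ino\}$ of weight $w_{\overrightarrow{G}}(u,v)$ for every directed edge $(u,v) \in E_{\overrightarrow{G}}$. This graph has $2n$ vertices and $m$ edges, and by direct computation $\deg_H(u_\out) = \deg_\out(u)$ and $\deg_H(v_\ino) = \deg_\ino(v)$. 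Consequently, for any disjoint $A, B \subset V_{\overrightarrow{G}}$ one has $w_H(A_\out, B_\ino) = w_{\overrightarrow{G}}(A,B)$ and $\vol_H(A_\out \cup B_\ino) = \vol_\out(A) + \vol_\ino(B)$, so
\[
\overline{\phi}_H(A_\out, B_\ino) = \overline{\phi}_{\overrightarrow{G}}(A, B).
\]
Applying this correspondence to the $k$ clusters realising $\bar{\rho}_{\overrightarrow{G}}(k)$ yields $\bar{\rho}_H(k) \geq \bar{\rho}_{\overrightarrow{G}}(k)$.

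I then run the undirected sparsifier of Theorem~\ref{thm:undirected_main} on $H$ in time $\widetilde{O}(m)$, obtaining a subgraph $H^* \subseteq H$ with $\widetilde{O}(n)$ edges for which $\bar{\rho}_{H^*}(k) = \Omega(\bar{\rho}_H(k))$ with high probability and which contains only $k$ undirected bipartite-like clusters. Because $H^*$ inherits the bipartition $(V_\out, V_\ino)$ of $H$, every surviving edge of $H^*$ has the form $\{u_\out, v_\ino\}$ and therefore projects uniquely to the directed edge $(u,v)$ of the same weight; I take the resulting digraph to be $\overrightarrow{G^*}$.

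The main obstacle is to argue that the optimal $k$-way clustering of $H^*$ can be chosen to respect the bipartition, so that it pulls back to a $k$-way directed clustering of $\overrightarrow{G^*}$ with comparable dual Cheeger constant. For any pair $(A,B)$ of disjoint subsets of $V(H^*)$, decompose $A = A_\out \cup A_\ino$ and $B = B_\out \cup B_\ino$ along the bipartition; since $H^*$ is bipartite, $w_{H^*}(A,B) = w_{H^*}(A_\out, B_\ino) + w_{H^*}(A_\ino, B_\out)$, and the weighted mediant inequality $(a_1+a_2)/(b_1+b_2) \leq \max(a_1/b_1, a_2/b_2)$ shows that one of the two aligned sub-pairs $(A_\out, B_\ino)$ or $(B_\out, A_\ino)$ has dual conductance at least $\overline{\phi}_{H^*}(A,B)$. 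Applying this replacement to each of the $k$ clusters preserves the count $k$ and produces clusters that respect the bipartition; by the first-paragraph correspondence (read in reverse) these pull back to $k$ directed clusters $(A_i, B_i)$ of $\overrightarrow{G^*}$ with $\overline{\phi}_{\overrightarrow{G^*}}(A_i, B_i) = \Omega(\bar{\rho}_{\overrightarrow{G}}(k))$. The same bipartition-alignment argument, run at level $k+1$ and combined with the ``only $k$ clusters'' conclusion for $H^*$, shows that $\overrightarrow{G^*}$ likewise admits no more than $k$ directed bipartite-like clusters. The total running time is $O(m)$ to build $H$, $\widetilde{O}(m)$ to sparsify, and $\widetilde{O}(n)$ to project, as claimed.
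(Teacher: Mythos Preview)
Your construction and the paper's are the same semi-double cover; the only cosmetic difference is that you phrase the correspondence as $\overline{\phi}_H(A_\out,B_\ino)=\overline{\phi}_{\overrightarrow{G}}(A,B)$ and invoke Theorem~\ref{thm:undirected_main}, whereas the paper writes the equivalent identity $\phi_H(A_1\cup B_2)=1-\overline{\phi}_{\overrightarrow{G}}(A,B)$ (their Lemma~\ref{lem:Fphi}) and invokes the conductance sparsifier of Lemma~\ref{lem:spar}. Since $H$ is bipartite these two identities are the same relation, and the two sparsifiers even use the same sampling probabilities. So the overall route is not genuinely different.

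There is, however, a real gap in your pull-back step. Your mediant argument correctly shows that any pair $(A,B)$ in $H^*$ can be replaced by an aligned pair $(C,D)$ with $C\subseteq V_\out$, $D\subseteq V_\ino$ and $\overline{\phi}_{H^*}(C,D)\ge\overline{\phi}_{H^*}(A,B)$. But aligning with the bipartition does \emph{not} make the projection to $\overrightarrow{G^*}$ a pair of disjoint vertex sets: if $u_\out\in C$ and $u_\ino\in D$ (which is allowed, since $u_\out$ and $u_\ino$ are distinct vertices of $H^*$), then the pulled-back sets $P=\{u:u_\out\in C\}$ and $Q=\{v:v_\ino\in D\}$ both contain $u$. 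The same phenomenon occurs across different pairs, so the required conditions $A_i\cap B_i=\emptyset$ and $A_i\cap B_j=\emptyset$ in the definition of $\bar{\rho}_{\overrightarrow{G^*}}(k)$ can fail. Concretely, for the complete digraph the best aligned pair in $H$ is $(\{1_\out,\dots,(n/2)_\out\},\{1_\ino,\dots,(n/2)_\ino\})$, which pulls back to $P=Q=\{1,\dots,n/2\}$.

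This is precisely the obstruction the paper isolates under the name ``simple set'' ($|\{v_1,v_2\}\cap S|\le 1$): their Lemma~\ref{lem:directed_reduction} expresses $\bar{\rho}_{\overrightarrow{G}}(k)$ as an optimum over simple sets only, and they then spend a separate argument (splitting any $S$ along the original clusters into $S_1,S_2$ and using a mediant bound under an extra structural assumption on $w_H(S_1,S_2)$) to pass from arbitrary sets to simple ones. Your proof needs an analogous step; the bipartition-alignment mediant you gave handles one of the two obstacles (out/in alignment) but not the second (simplicity), so as written the conclusion $\bar{\rho}_{\overrightarrow{G^*}}(k)=\Omega(\bar{\rho}_{\overrightarrow{G}}(k))$ does not follow.
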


Now we examine the significance of Theorems~\ref{thm:undirected_main} and \ref{thm:digraph}. We first highlight that our algorithms preserve the cut values $w(A_i, B_i)$ between the pairs of vertex sets $A_i$ and $B_i$ for $1\leq i\leq k$; this objective is \emph{different} from the one for most  graph sparsification problems, which only preserve the cut values between vertex set $S$ and $V\setminus S$. Secondly, our algorithms preserve   $k$ bipartite-like clusters, and the value of $k$  in the output graph is the same as the   input graph. Thirdly, our second result works for  \emph{directed graphs}; this result is very interesting on its own since most sparsification algorithms are only applicable for undirected graphs. Finally, while the design of most graph sparsification algorithms are based on Laplacian solvers making it unpractical, our designed algorithms only use random sampling.

The design of our  algorithms is based on several new reductions and sampling routines, and our algorithms can be implemented online with the degree oracles.    As such one can run our algorithms online while exploring the underlying graph with existing local algorithms~(e.g., \citep{Andersen2010,LP13}), resulting in direct improvement on the  running time of the existing algorithms. 
To demonstrate this, 
we conduct experimental studies and show that our algorithms can be directly applied to significantly speed up the running time of the  the ones presented in \citep{MS21}, while preserving similar output results on both the synthetic and real-world datasets. 

\paragraph{Related Work.} Bipartite-like clusters are widely studied in both theoretical computer science and machine learning communities.   In theoretical computer science, 
\citet{Trevisan09} developed a spectral algorithm that finds a bipartite-like cluster  in an undirected graph, and used this to design an approximation algorithm for the max-cut problem. This result is improved by \citet{soto}.  \citet{Liu} studied the relationship between the $k$-way dual Cheeger constant and the eigenvalues of the normalised graph Laplacians, and developed a Cheer-type inequality.

In the  machine learning community,   bipartite-like clusters are employed to model   highly-correlated   data items of different types, and algorithms  finding these clusters are studied in different settings. 
\citet{Andersen2010}, \citet{LP13} and \citet{MS21} presented local algorithms that find bipartite-like clusters, and \citet{MS_hypergraph} presented an  algorithm that finds bipartite components in hypergraphs.  
\citet{CLSZ}  proved  that densely connected clusters in a  directed graph can be uncovered through spectral clustering on a complex-valued Hermitian matrix representation of directed graphs.
\citet{NP22} designed  a sublinear-time  oracle which, under a certain condition, correctly classified the membership of most vertices  in a    set of hidden planted ground-truth clusters in signed graphs.

 Our work  relates  to   finding clusters in \emph{disassortative} networks~\citep{mooreActiveLearningNode2011, PWC+2019, zhu2020beyond}, although most existing techniques are based on semi-supervised and global methods. 
 Our work is also  related to designing graph sparsification algorithms, e.g., \citep{ST11,BSS,CohenKPPRSV17,Lee017,LeeS18}. 
 We highlight that, while a spectral sparsifier   preserves the cut value $w(S, V\setminus S)$ between any  vertex set $S$ and its complement $V\setminus S$, our algorithms' output preserves the cut value $w(A_i, B_i)$ for pairs of vertex sets $A_i$ and $B_i$. Moreover, our algorithms are much easier to implement, and work for \emph{directed} graphs. 
 
\section{Preliminaries}

In this section we list the notation and preliminary results used in the analysis.

\paragraph{Matrix Representation of Graphs.} 
We always use $G = (V,E,w)$ to represent an  undirected and  weighted graph with $n$ vertices  and weight function $w:E \to \mathbb{R}_{\ge 0}$.  The degree of any vertex $u$ is defined as   $d_G(u) = \sum_{u \sim v} w(u,v)$, where the notation $u \sim v$ represents that   $u$ and $v$ are adjacent, i.e., $\{u,v\} \in E(G)$.  The normalised indicator vector of any $S \subset V$ is defined by \[\chi_S(v)= \sqrt{ \frac{d_G(v)}{ \vol_G(S)} }\] if $v\in S$, and $\chi_S(v) = 0$ otherwise. 
Let $A_G$ be the adjacency matrix of $G$ defined by $(A_G)_{u,v} = w(u,v)$ if $\{u,v\} \in E(G)$, and  $(A_G)_{u,v} = 0$ otherwise. The degree matrix $D_G$ of $G$ is a diagonal matrix defined by $(D_G)_{u,u} = d_G(u)$, and the normalised Laplacian of $G$ is defined by \[\mathcal{L}_G = I - D_G^{-1/2} A_G D_G^{-1/2}.\] We can also write the normalised Laplacian matrix  with respect to the indicator vectors of the vertices: for each vertex $v$, we define an indicator vector $\chi_v \in \mathbb{R}^n$ by $\chi_v(u) = \frac{1}{\sqrt{d_v}}$ if $u = v$, and $\chi_v(u) = 0$ otherwise. We further define $b_e = \chi_u - \chi_v$ for each edge $e = \{u,v\}$, where the orientation of $e$ is chosen arbitrarily. Then, we have \[\mathcal{L}_G = \ds\sum_{e = \{u,v\} \in E} w(u,v) \cdot b_eb_e^{\intercal}.\]
We also define
\[
\mathcal{J}_G \triangleq I + D_G^{-1/2} A_G D_G^{-1/2}.\]
For any symmetric matrix $A\in\mathbb{R}^{n\times n}$, let $ \lambda_1(A) \le \lambda_2(A) \le \cdots \le \lambda_n(A)$ be the eigenvalues of $A$. For ease of presentation, we always    use $0 = \lambda_1 \le \lambda_2 \le \cdots \le \lambda_n \le 2$ to express the eigenvalues of $\mathcal{L}_G$, with the corresponding orthonormal eigenvectors $f_1,f_2,\cdots,f_n$. With  slight abuse of notation, we use $\mathcal{L}_G^{-1}$ for the pseudo-inverse of $\mathcal{L}_G$, i.e., \[\mathcal{L}_G^{-1} \triangleq \sum_{i=2}^{n} \frac{1}{\lambda_i} f_i f_i^{\rot}.\] Note that when $G$ is connected, it holds that $\lambda_2>0$ and the matrix $\mathcal{L}_G^{-1}$ is well defined. We sometimes  drop the subscript $G$   when it is clear from the context. 

For any   $x \in \mathbb{R}^n$ we define   $\|x\| \triangleq \sqrt{\sum_{i=1}^n x_i^2}$, and for  any   $M \in \mathbb{R}^{n \times n}$ we define $$\|M\| = \max_{x \in \mathbb{R}^n \setminus \{\textbf{0}\}} \frac{\|Mx\|}{\|x\|}.$$

\paragraph{Graph expansion and Cheeger inequality.} 
For any undirected graph $G$, the expansion (or conductance) of any non-empty subset $S \subset V$ in $G$ is defined as \[\phi_G(S) \triangleq \frac{w_G(S,\bar{S})}{\vol_G(S)},\] where $\bar{S}$ is the complement of $S$.  We call subsets of
vertices   $S_1,S_2, \cdots, S_k$ a $k$-way partition of $G$ if $S_i \ne \emptyset$ for all $1 \le i \le k$, $S_i \cap S_j = \emptyset$ for $i \ne j$ and $\bigcup_{i=1}^k S_i = V$. For any $k \in \mathbb{N}$, the $k$-way expansion constant is defined as 
\[
\rho_G(k) = \min_{S_1,S_2, \cdots, S_k} \max_{1 \le i \le k} \phi_G (S_i),\]
where the minimum is taken over all possible $k$-way partitions of $G$.  
\citet{higher-order-cheeger}  proves 
the following higher-order Cheeger inequality: 
\begin{lem}[Higher-order Cheeger Inequality, \citep{higher-order-cheeger}]
	It holds for any undirected graph $G$ of $n$ vertices and integer $1\leq k \leq n$ that 
	\[  \lambda_k/2 \le \rho_G(k) \le C k^2 \sqrt{\lambda_k},\]
	where $C$ is a universal constant.
\end{lem}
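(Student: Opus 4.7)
The plan is to prove the two inequalities separately. The lower bound $\lambda_k/2 \le \rho_G(k)$ is the easy direction via the Courant-Fischer variational principle. Let $(S_1, \ldots, S_k)$ be an optimal $k$-way partition achieving $\rho_G(k) = \phi$, and let $\chi_{S_i}$ be the normalised indicator vectors from the preliminaries; since their supports are disjoint they are orthonormal, and a direct computation yields $\chi_{S_i}^\intercal \mathcal{L}_G \chi_{S_i} = \phi_G(S_i) \le \phi$. For any unit vector $y = \sum_i c_i \chi_{S_i}$ in their $k$-dimensional span, the off-diagonal terms satisfy $\chi_{S_i}^\intercal \mathcal{L}_G \chi_{S_j} = -w_G(S_i,S_j)/\sqrt{\vol_G(S_i)\vol_G(S_j)}$, and grouping cross terms via AM-GM bounds the full quadratic form by $2\phi \sum_i c_i^2 = 2\phi$. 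The min-max principle then gives $\lambda_k \le 2\rho_G(k)$.

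For the hard direction $\rho_G(k) \le C k^2 \sqrt{\lambda_k}$, I would follow the spectral-embedding approach of Lee, Oveis Gharan and Trevisan. Define a map $F: V \to \mathbb{R}^k$ by $F(v) = (f_1(v), \ldots, f_k(v))$, where $f_1, \ldots, f_k$ are the orthonormal eigenvectors of $\mathcal{L}_G$ corresponding to the smallest eigenvalues $\lambda_1 \le \cdots \le \lambda_k$. Orthonormality of the $f_i$'s translates into the \emph{isotropy} $\sum_v d_G(v) F(v) F(v)^\intercal = I$, so the embedded points $\{F(v)\}$ are spread out in $\mathbb{R}^k$. The goal is to carve out $k$ disjointly supported test functions $g_1, \ldots, g_k$ derived from $F$ whose Rayleigh quotients each satisfy $g_i^\intercal \mathcal{L}_G g_i / \|g_i\|^2 = O(k^4 \lambda_k)$. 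Once these are in hand, applying the standard Cheeger inequality in its Dirichlet form to each $g_i$ extracts a sweep-cut set $S_i \subset \mathrm{supp}(g_i)$ with $\phi_G(S_i) = O(\sqrt{k^4 \lambda_k}) = O(k^2 \sqrt{\lambda_k})$, and the disjointness of the supports assembles these into a valid $k$-way partition, giving $\rho_G(k) = O(k^2 \sqrt{\lambda_k})$.

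The main obstacle is precisely the construction of the $k$ smooth bumps $g_1, \ldots, g_k$ on top of the isotropic embedding $F$. I would use a randomised metric partitioning of $\mathbb{R}^k$ into $k$ cells, together with a Lipschitz distance-based truncation that smoothly cuts off $F$ outside each cell, in the spirit of the Lee-Oveis Gharan-Trevisan argument. The analysis must simultaneously control two opposing quantities: the $L^2$ mass of each truncated bump must stay large (say $\Omega(1/k)$ of its cell's mass), while the additional Dirichlet energy introduced by the truncation near cell boundaries must be kept at $O(k^3 \lambda_k)$ per bump in expectation over the random partition. Balancing these two requirements and showing the existence of a partition that simultaneously succeeds for all $k$ bumps is the delicate core of the argument and the source of the $\mathrm{poly}(k)$ loss in the final bound.
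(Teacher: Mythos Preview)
The paper does not provide its own proof of this lemma; it is quoted from \citep{higher-order-cheeger} as a known result, so there is nothing in the paper to compare against beyond the citation itself. Your sketch is a faithful outline of the Lee--Oveis Gharan--Trevisan argument that the paper is citing: the lower bound via Courant--Fischer on the span of the normalised indicators $\chi_{S_i}$ is correct (your cross-term bound is exactly the computation $(c_i/\sqrt{\vol S_i}-c_j/\sqrt{\vol S_j})^2\le 2c_i^2/\vol S_i+2c_j^2/\vol S_j$ summed against $w(S_i,S_j)$), and the upper bound plan---isotropic spectral embedding $F$, randomised geometric partition of $\mathbb{R}^k$, Lipschitz bump functions supported in each cell, and a Cheeger sweep on each bump---is precisely the structure of the original proof.

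One small point worth flagging: the paper defines $\rho_G(k)$ as a minimum over genuine \emph{partitions} of $V$, whereas the LGT construction natively outputs $k$ disjoint sets that need not cover $V$. You should note the standard patch: absorb the leftover vertices into the $S_i$ of largest volume, which at most doubles that one conductance and does not affect the $O(k^2\sqrt{\lambda_k})$ bound.
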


Generalising this,  \citet{Liu}    proves the following higher-order dual-Cheeger inequality: 
\begin{lem}[Higher-order dual-Cheeger Inequality, \citep{Liu}]\label{lem:dualCheeger}
	It holds for any undirected graph $G$ of $n$ vertices and integer   $1 \le k \le n$ that 
	\[(2-\lambda_{n-k+1})/2  \le 1- \bar{\rho}_G(k) \le Ck^3 \sqrt{2-\lambda_{n-k+1}},\]
	where $C$ is a universal constant.
\end{lem}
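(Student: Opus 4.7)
The plan is to establish the two inequalities by exploiting the identity $\mathcal{J}_G = 2I - \mathcal{L}_G$, so that the eigenvalues of $\mathcal{J}_G$ are $2-\lambda_n \le \cdots \le 2-\lambda_1$. In particular $2-\lambda_{n-k+1}$ is the $k$-th smallest eigenvalue of $\mathcal{J}_G$, and detecting $k$ bipartite-like pairs is equivalent to locating $k$ eigenvalues of $\mathcal{J}_G$ close to $0$. This lets me recast the dual statement as a higher-order Cheeger statement for the PSD operator $\mathcal{J}_G$, at which point the proof mirrors the standard higher-order Cheeger argument for $\mathcal{L}_G$.

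For the easy direction $(2-\lambda_{n-k+1})/2 \le 1-\bar{\rho}_G(k)$, I would use Courant–Fischer on $\mathcal{J}_G$. Given any $k$ disjoint pairs $(A_1,B_1),\ldots,(A_k,B_k)$ certifying $\bar{\rho}_G(k)$, I form test vectors $v_i = D_G^{1/2}(\mathbf{1}_{A_i}-\mathbf{1}_{B_i})/\sqrt{\vol_G(A_i\cup B_i)}$. A short computation shows
\[
\frac{v_i^{\intercal}\mathcal{J}_G v_i}{v_i^{\intercal} v_i} \;=\; 1 - \frac{w_G(A_i,A_i)+w_G(B_i,B_i)-2w_G(A_i,B_i)+\text{boundary}}{\vol_G(A_i\cup B_i)} \;\le\; 2\bigl(1-\overline{\phi}_G(A_i,B_i)\bigr),
\]
and the $v_i$ are mutually orthogonal because the pairs $(A_i,B_i)$ are pairwise disjoint. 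Applying the min–max principle to the $k$-dimensional subspace spanned by the $v_i$ gives $2-\lambda_{n-k+1} \le 2(1-\bar{\rho}_G(k))$, which is exactly the left inequality.

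For the hard direction $1-\bar{\rho}_G(k) \le Ck^3\sqrt{2-\lambda_{n-k+1}}$, I would mimic the Louis–Raghavendra–Tetali–Vempala / Lee–Oveis Gharan–Trevisan framework but applied to $\mathcal{J}_G$. Let $f_{n-k+1},\ldots,f_n$ be orthonormal eigenvectors of $\mathcal{L}_G$ for the top $k$ eigenvalues, equivalently the bottom $k$ eigenvectors of $\mathcal{J}_G$. Using these, embed each vertex $v$ into $\mathbb{R}^k$ via $F(v) = (f_{n-k+1}(v),\ldots,f_n(v))$ and produce $k$ disjointly supported ``smooth localizing'' vectors $\psi_1,\ldots,\psi_k$ whose Rayleigh quotient with respect to $\mathcal{J}_G$ is at most $O(k^2)\cdot(2-\lambda_{n-k+1})$; this step is standard from higher-order Cheeger and accounts for a factor of $k^2$. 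For each localizing vector, I then need to round it into a \emph{pair} $(A_i,B_i)$ rather than a single set: split $\psi_i$ into its positive and negative parts and run a joint two-sided sweep, choosing thresholds $t_+, t_-$ so that $A_i = \{v:\psi_i(v)>t_+\}$ and $B_i = \{v:\psi_i(v)<-t_-\}$, then argue via a Cauchy–Schwarz / probabilistic-threshold calculation that for some choice of thresholds $1-\overline{\phi}_G(A_i,B_i) = O(\sqrt{R(\psi_i)})$, where $R(\cdot)$ is the Rayleigh quotient of $\mathcal{J}_G$. The pairs are disjoint because the localizing vectors have disjoint supports.

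The main obstacle I expect is the two-sided sweep in the last step: unlike the classical sweep cut for conductance, the dual rounding has to simultaneously control edges inside $A_i$, inside $B_i$, and crossing from $A_i\cup B_i$ to its complement, while the only quantity the eigenvector gives direct control over is a signed quadratic form. Handling this correctly requires carefully relating $\psi_i^{\intercal}\mathcal{J}_G\psi_i$ to $\sum_{u\sim v} w(u,v)(\psi_i(u)/\sqrt{d_u}+\psi_i(v)/\sqrt{d_v})^2$ (note the plus sign, characteristic of $\mathcal{J}_G$), and then using a random-threshold argument to extract a good pair; this is the step where both the square-root and an additional factor of $k$ from the boundary analysis appear, yielding the final $Ck^3\sqrt{2-\lambda_{n-k+1}}$ bound.
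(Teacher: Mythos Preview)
The paper does not actually prove this lemma; it is stated with a citation to \citep{Liu} and used as a black box, so there is no ``paper's own proof'' to compare against. Your sketch is a faithful outline of the argument in Liu's paper: reinterpret $2-\lambda_{n-k+1}$ as the $k$-th smallest eigenvalue of the signless Laplacian $\mathcal{J}_G$, use $\pm1$ test vectors on the pairs for the easy direction via Courant--Fischer, and for the hard direction run the Lee--Oveis Gharan--Trevisan higher-order Cheeger machinery on $\mathcal{J}_G$ with a two-sided sweep to produce disjoint pairs. The Rayleigh-quotient computation you gave for the easy direction is correct (the quotient works out to $\bigl(4w(A_i,A_i)+4w(B_i,B_i)+w(A_i\cup B_i,\overline{A_i\cup B_i})\bigr)/\vol(A_i\cup B_i)\le 2(1-\overline{\phi}_G(A_i,B_i))$), and you have correctly identified the signed-sweep rounding as the nontrivial step in the hard direction.
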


The higher-order dual Cheeger inequality can be viewed as a quantitative version of the fact that $\lambda_{n-k+1} = 2$ if and only if $G$ has at least $k$ bipartite connected components.

\section{Proof of Theorem~\ref{thm:undirected_main}  \label{sec:undirected}}

In this section we present a nearly-linear time  sparsification algorithm such that every bipartite-like cluster  in an undirected graph $G$ is approximately preserved in the sparsifed graph $G^*$, and sketch the proof. Our result is as follows:

	\begin{theorem}[Formal Statement of Theorem~\ref{thm:undirected_main}]\label{thm:undirected}
		There exists a nearly-linear time algorithm that, given an input graph $G = (V,E,w)$ with $\bar{\rho}_G(k)\geq \frac{1}{\log n}$ for constant  some $k$, 
		with high probability computes a sparsifier $G^* = (V,F \subset E, \widetilde{w})$ with $|F| = O \left( \frac{n\cdot  \log^3 n}{2 - \lambda_{n-k}}\right)$ edges such that the following hold: (1)   $\bar{\rho}_{G^*}(k) = \Omega(\bar{\rho}_G(k))$; (2)  $\lambda_{k+1}(\mathcal{J}_{G^*}) = \Theta(\lambda_{k+1}(\mathcal{J}_G))$.  
	\end{theorem}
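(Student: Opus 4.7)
The plan is to reduce the task to standard spectral sparsification of an auxiliary graph, namely the \emph{bipartite double cover} $\tilde{G}$ of $G$. This graph has vertex set $V\times\{0,1\}$ and, for every edge $\{u,v\}\in E(G)$ of weight $w(u,v)$, contains the two twin edges $\{(u,0),(v,1)\}$ and $\{(u,1),(v,0)\}$ of the same weight. A short block-matrix computation shows that $\mathcal{L}_{\tilde{G}}=\bigl(\begin{smallmatrix} I & -M\\ -M & I \end{smallmatrix}\bigr)$ with $M=D_G^{-1/2}A_GD_G^{-1/2}$, and the lifts $(v,v)$ and $(v,-v)$ of each eigenvector of $M$ show that the spectrum of $\mathcal{L}_{\tilde{G}}$ is exactly $\{\lambda_i(\mathcal{L}_G)\}_i\cup\{\lambda_i(\mathcal{J}_G)\}_i$. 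Thus a Spielman--Srivastava style $(1\pm\epsilon)$-spectral sparsifier of $\tilde{G}$ whose sampling respects the double-cover symmetry descends to a subgraph $G^*\subseteq G$ for which $(1-\epsilon)\mathcal{J}_G\preceq\mathcal{J}_{G^*}\preceq(1+\epsilon)\mathcal{J}_G$ (and simultaneously the analogous relation holds for $\mathcal{L}_G$).

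Concretely, I would run leverage-score sampling on $\tilde{G}$ using a single coin per edge of $G$ for its two twin copies, approximating the required effective resistances through a Johnson--Lindenstrauss sketch combined with a nearly-linear time Laplacian solver applied to $\mathcal{L}_{\tilde{G}}$; once the sketch has been built, the inclusion probabilities can be queried on the fly from the degrees of the two endpoints, which is what makes the algorithm online. Setting the distortion parameter to $\epsilon=\Theta\bigl(\sqrt{2-\lambda_{n-k}}/\log n\bigr)$, the standard $O(n\log n/\epsilon^2)$ edge bound of Spielman--Srivastava produces the claimed $O\bigl(n\log^3 n/(2-\lambda_{n-k})\bigr)$ edges in $G^*$, and the whole pipeline runs in $\widetilde{O}(m)$ time since $|E(\tilde{G})|=2m$.

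The second conclusion of the theorem is then essentially immediate: by Courant--Fischer, $\lambda_{k+1}(\mathcal{J}_{G^*})\in[(1-\epsilon)\lambda_{k+1}(\mathcal{J}_G),(1+\epsilon)\lambda_{k+1}(\mathcal{J}_G)]$, and since $\lambda_{k+1}(\mathcal{J}_G)=2-\lambda_{n-k}$ this is $\Theta(\lambda_{k+1}(\mathcal{J}_G))$. For the first conclusion I will combine both directions of Lemma~\ref{lem:dualCheeger}: its upper bound applied to $G$ gives $\lambda_k(\mathcal{J}_G)\le 2(1-\bar{\rho}_G(k))$, the spectral approximation transports this to $\lambda_k(\mathcal{J}_{G^*})\le 2(1+\epsilon)(1-\bar{\rho}_G(k))$, and its lower bound applied to $G^*$ then yields $1-\bar{\rho}_{G^*}(k)\le Ck^3\sqrt{2(1+\epsilon)(1-\bar{\rho}_G(k))}$. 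With $k$ a constant, $\bar{\rho}_G(k)\ge 1/\log n$, and $\epsilon$ as chosen, this rearranges to $\bar{\rho}_{G^*}(k)=\Omega(\bar{\rho}_G(k))$.

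The main obstacle, in my view, is precisely this eigenvalue-to-Cheeger transfer. The square root inside the dual-Cheeger inequality means that a multiplicative $(1\pm\epsilon)$ distortion of $\mathcal{J}_G$ only translates into a $(1\pm\Theta(\epsilon))$ preservation of $1-\bar{\rho}$, a quantity that may be \emph{small} even when $\bar{\rho}$ itself is close to $1$; a careless bookkeeping of constants can therefore flip a ``good'' cluster into a ``bad'' one. Matching the allowed slack of $\Omega(\bar{\rho}_G(k))=\Omega(1/\log n)$ is exactly what forces the choice $\epsilon=\Theta(\sqrt{2-\lambda_{n-k}}/\log n)$ and in turn the extra $\log^2 n$ factor in the edge count, and also why the assumption $\bar{\rho}_G(k)\ge 1/\log n$ appears in the hypothesis rather than being removable.
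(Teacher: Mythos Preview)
Your reduction to the bipartite double cover and the spectral-sparsification machinery would indeed handle statement~(2) cleanly, but the derivation of statement~(1) via the dual-Cheeger round trip has a genuine gap. From
\[
1-\bar\rho_{G^*}(k)\ \le\ Ck^3\sqrt{2(1+\epsilon)\bigl(1-\bar\rho_G(k)\bigr)}
\]
you claim to extract $\bar\rho_{G^*}(k)=\Omega(\bar\rho_G(k))$. Take the regime $\bar\rho_G(k)=1/\log n$ permitted by the hypothesis: then $1-\bar\rho_G(k)\to 1$, and the right-hand side becomes the fixed constant $Ck^3\sqrt{2(1+\epsilon)}$, which for any realistic Cheeger constant $C$ and any $k\ge 1$ exceeds $1$. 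The inequality is then vacuous and gives no lower bound on $\bar\rho_{G^*}(k)$ at all. The square-root loss in Lemma~\ref{lem:dualCheeger} is not a matter of bookkeeping; it genuinely prevents a multiplicative transfer of $\bar\rho$ through eigenvalues unless $1-\bar\rho_G(k)\lesssim 1/(Ck^3)^2$, a far stronger assumption than $\bar\rho_G(k)\ge 1/\log n$. (Your closing paragraph has the difficulty inverted: the dangerous case is $\bar\rho$ \emph{small}, not close to $1$.)

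The paper sidesteps this entirely by never passing through eigenvalues for statement~(1). It fixes the optimal pairs $(A_i,B_i)$ achieving $\bar\rho_G(k)$ and shows directly, via Chebyshev for the cuts and Bernstein for the degrees, that $w_{G^*}(A_i,B_i)=\Omega(w_G(A_i,B_i))$ and $\mathrm{vol}_{G^*}(A_i\cup B_i)=\Theta(\mathrm{vol}_G(A_i\cup B_i))$; hence $\bar\phi_{G^*}(A_i,B_i)=\Omega(\bar\phi_G(A_i,B_i))$ with no Cheeger loss. Your double-cover framework can in fact be repaired the same way: a $(1\pm\epsilon)$-spectral sparsifier of $\tilde G$ preserves every cut and every degree up to $(1\pm\epsilon)$, so for $S=A_0\cup B_1$ one gets $1-\bar\phi_{G^*}(A,B)=(1\pm O(\epsilon))(1-\bar\phi_G(A,B))$, whence $\bar\phi_{G^*}(A,B)\ge \bar\phi_G(A,B)-O(\epsilon)$; with $\epsilon=O(1/\log n)$ and $\bar\phi_G(A_i,B_i)\ge 1/\log n$ this \emph{does} yield the claim. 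But that is a cut-preservation argument, not the eigenvalue-to-Cheeger transfer you wrote down. A secondary point: the paper's sampling probability depends only on $w(u,v)$, $d_G(u)$, $d_G(v)$ and a global parameter, which is what justifies the ``online with degree oracles'' claim; effective-resistance sampling requires a precomputed JL sketch built from a global Laplacian solve, so your online assertion does not survive as stated.
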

    
The first statement of Theorem~\ref{thm:undirected} shows that the $k$  bipartite-like clusters   of $G$ is approximately preserved in $G^*$, and together with Lemma~\ref{lem:dualCheeger} the second statement   shows that the number of bipartite-like clusters   in $G$ and $G^*$ is the same.

\paragraph{Algorithm.}
Our algorithm  is similar with \citep{SZ19} at a high level,  and is based on sampling edges in $G$ with carefully defined probabilities. Formally, for an input undirected graph $G=(V,E,w_G)$, the algorithm starts with $G^*=(V,\emptyset, \widetilde{w})$ and samples every edge $u\sim v$ in $G$ with probability
$p_e \triangleq p_u(v)+p_v(u)-p_u(v) \cdot p_v(u),$
where
	\begin{align}
		\lefteqn{p_u(v)}\nonumber\\
       & \triangleq \min \left\{w_G(u,v) \cdot \frac{C \cdot  \log^3 n}{d_G(u) \cdot (2 - \lambda_{n-k})},1 \right\},\label{eqn:pu}
	\end{align}
for some constant $C$. For every sampled edge $e=\{u,v\}$, the algorithm adds $e$ to graph $G^*$, and sets $w_{G^*}(e) = w_G(e)/p_e$. Notice that, the choice of $C$ only changes the sampling probability by a constant factor, and   doesn't influence the asymptotic order of the sampled edges. Moreover, in practice we usually treat $\frac{C \cdot  \log^3 n}{2 - \lambda_{n-k}}$ as $O(\log ^ c n)$  for a constant $c$, and this only influences the   total number of sampled edges and the algorithm's   running time by a poly-logarithmic factor.

\paragraph{Proof Sketch of Theorem~\ref{thm:undirected}.}
We first  prove that the cut values between $A_i$ and $B_i$ in $G$ is preserved in $H$ for any $1\leq i\leq k$.  For any edge $e = \{u,v\}$, we define the  random variable $Y_e$ by  $Y_e = w_G(u,v)/p_e$ with probability $p_e$, and $Y_e=0$ otherwise.
By defining $X = w_H(A_i,B_i)$, we prove that  $\E[X]=w_G(A_i,B_i)$ and
	\begin{align*}		\lefteqn{\E\left[X^2\right]}\\
    &\le \frac{2- \lambda_{n-k}}{C \cdot  \log^3 n} \ds \sum_{\substack{e = \{u,v\} \\ u \in A_i, v \in B_i}} w(u,v) \cdot \left(\frac{d_G(u)+d_G(v)}{2}\right).
  \end{align*}
	Let $\{(A_i,B_i)\}_{i=1}^k$ be the optimal clusters corresponding to   $\bar{\rho}(k)$. 
   	Then, we have  for every $1 \le i \le k$ that  $$\bar{\rho}_G(k) \le \overline{\phi}_G (A_i,B_i) = \frac{2 w_G(A_i,B_i)}{\vol_G(A_i \cup B_i)},$$ which implies
\[
		\frac{\bar{\rho}_G(k)}{2} \cdot \vol_G(A_i \cup B_i) \le \sum_{\substack{e = \{u,v\} \\ u \in A_i, v \in B_i}} w_G(u,v).
  \]
	Applying the  Chebyshev's inequality,  we have for any constant $c\in\mathbb{R}^+$ that 
	\begin{align}\nonumber
		\begin{split}
		& \textbf{P}\left[|X-\E[X]| \ge c \cdot \E[X]\right]
   \le \frac{\E[X^2]}{c^2 \cdot \E[X]^2}\\
			&\le \frac{2\cdot (2- \lambda_{n-k})}{c^2 \cdot C\cdot  \log^3 n \cdot \bar{\rho}_G(k)^2}\\
            & \cdot \frac{\left(\max_{\substack{e = \{u,v\} \\ u \in A_i, v \in B_i}} \{d_G(u)+d_G(v)\}\right)}{\vol_G(A_i \cup B_i)^2} \cdot \sum_{\substack{e = \{u,v\} \\ u \in A_i, v \in B_i}} w_G(u,v).
		\end{split}
	\end{align}
	Since $\vol_G(A_i \cup B_i) = \sum_{u \in A_i} d_G(u) + \sum_{v \in B_i} d_G(v)$ and $d_G(u) = \sum_{u \sim v} w_G(u,v)$, we have
    \begin{align*}
        &\max_{\substack{e = \{u,v\} \\ u \in A_i, v \in B_i}} \left(d_G(u)+d_G(v)\right)\\
        & \le \sum_{u \in A_i} d_G(u) + \sum_{v \in B_i} d_G(v) = \vol_G(A_i \cup B_i)
    \end{align*} and $
	\sum_{\substack{e = \{u,v\} \\ u \in A_i, v \in B_i}} w_G(u,v) \le \vol_G(A_i \cup B_i)$.
 Applying these gives us that
 \begin{align*}
     & \textbf{P}\left[|X-\E[X]| \ge c \cdot \E[X]\right] \\
     & \le \frac{2(2- \lambda_{n-k})}{c^2 \cdot C \cdot  \log^3 n \cdot \bar{\rho}(k)^2} = O\left(\frac{1}{ \log n}\right).
 \end{align*}
	Hence, by the union bound, we have that $
		w_H(A_i,B_i) = \Omega\left(w_G(A_i,B_i)\right) \text{ for all } 1\le i \le k$.
The proof of the second statement of Theorem~\ref{thm:undirected} can be found in the appendix. Finally, the total number of edges in $H$ follows by the definition of sampling probability and the Markov inequality.  
This completes the  proof of Theorem~\ref{thm:undirected}.

\section{Proof of Theorem~\ref{thm:digraph}\label{sec:directed}}

In this section we present a nearly-linear time  sparsification algorithm such that 
every directed bipartite-like cluster  in a directed  graph   is approximately preserved in the output sparsifier, and  prove Theorem~\ref{thm:digraph}. Specifically, for a digraph $\overrightarrow{G}$ that contains exactly  $k$ pairs of    $(A_1,B_1),\ldots,(A_k,B_k)$ with high values of  $\overline{\phi}_{\overrightarrow{G}}(A_i,B_i)$
 for every $1\leq i\leq k$, our objective is to construct a sparse digraph $\overrightarrow{G^*}$, such that (i) the values of $\overline{\phi}_{\overrightarrow{G^*}} (A_i,B_i)$ are high for every $1\leq i\leq k$ and  (ii) the number of such pairs in $\overrightarrow{G^*}$ is the same as $\overrightarrow{G}$.  Our result is as follows:

 \begin{theorem}[Formal Statement of Theorem~\ref{thm:digraph}]\label{thm:digraph_formal} There is a nearly-linear time algorithm that, given a directed and weighted graph $\overrightarrow{G} = (V_{\overrightarrow{G}},E_{\overrightarrow{G}},w_{\overrightarrow{G}})$ with $n$ vertices and $k$ directed bipartite-like clusters satisfying  $\bar{\rho}_{\overrightarrow{G}}(k)=1-o(1/k)$
 as input, with high probability computes a sparsifier $\overrightarrow{G^*} = (V_{\overrightarrow{G}},F \subset E_{\overrightarrow{G}}, \widetilde{w})$ such that $\bar{\rho}_{\overrightarrow{G^*}}(k) = \Omega\left(\bar{\rho}_{\overrightarrow{G}}(k)\right)$. Moreover, the total number of edges in the output graph is nearly-linear in 
 $n$.
\end{theorem}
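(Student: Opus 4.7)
The plan is to reduce the directed setting to the undirected one through a bipartite double-cover construction and then invoke Theorem~\ref{thm:undirected}. For the input digraph $\overrightarrow{G} = (V_{\overrightarrow{G}}, E_{\overrightarrow{G}}, w_{\overrightarrow{G}})$, we construct an auxiliary undirected bipartite graph $H = (V_H, E_H, w_H)$ with vertex set $V_H = V_{\overrightarrow{G}} \times \{L, R\}$; for every directed edge $(u, v) \in E_{\overrightarrow{G}}$ we place a single undirected edge $\{(u, L), (v, R)\}$ in $H$ of weight $w_{\overrightarrow{G}}(u, v)$. By construction $d_H((u, L)) = \deg_{\out}(u)$ and $d_H((v, R)) = \deg_{\ino}(v)$, so for every pair of disjoint $A, B \subset V_{\overrightarrow{G}}$ we obtain the identity $\overline{\phi}_H(A \times \{L\}, B \times \{R\}) = \overline{\phi}_{\overrightarrow{G}}(A, B)$. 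Note that $H$ has the same number of edges as $\overrightarrow{G}$ and $2n$ vertices, so every nearly-linear time computation on $H$ stays nearly-linear in $n$ and $m$.

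Lifting the hypothesised $k$ directed bipartite-like clusters $(A_i, B_i)$ to $(\widetilde{A}_i, \widetilde{B}_i) \triangleq (A_i \times \{L\}, B_i \times \{R\})$ in $H$ gives $\bar{\rho}_H(k) \geq \bar{\rho}_{\overrightarrow{G}}(k) = 1 - o(1/k) \geq 1/\log n$, so the hypothesis of Theorem~\ref{thm:undirected} is satisfied. We then apply the undirected algorithm to $H$ to obtain a sparsifier $H^*$ with $\widetilde{O}(n)$ edges. The Chebyshev-based argument outlined in the proof sketch of Theorem~\ref{thm:undirected} further shows, cluster by cluster, that $w_{H^*}(\widetilde{A}_i, \widetilde{B}_i) = \Omega(w_H(\widetilde{A}_i, \widetilde{B}_i))$ with the relevant volumes approximately preserved. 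Because every surviving edge of $H^*$ is incident to exactly one $L$-copy and one $R$-copy, it pulls back uniquely to a directed edge in the output digraph $\overrightarrow{G^*}$, which we weight by the reweighted value $\widetilde{w}$. The cluster correspondence then yields $\overline{\phi}_{\overrightarrow{G^*}}(A_i, B_i) = \overline{\phi}_{H^*}(\widetilde{A}_i, \widetilde{B}_i) = \Omega(\overline{\phi}_{\overrightarrow{G}}(A_i, B_i))$ for each $i$, hence $\bar{\rho}_{\overrightarrow{G^*}}(k) = \Omega(\bar{\rho}_{\overrightarrow{G}}(k))$.

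The main obstacle will be controlling the \emph{number} of directed bipartite-like clusters in $\overrightarrow{G^*}$ so that it remains exactly $k$. On the undirected side this is addressed by the second guarantee of Theorem~\ref{thm:undirected}, namely $\lambda_{k+1}(\mathcal{J}_{H^*}) = \Theta(\lambda_{k+1}(\mathcal{J}_H))$, combined with Lemma~\ref{lem:dualCheeger}. Translating this into a directed statement is delicate: a bipartite-like cluster $(X, Y)$ in $H^*$ that places $(v, L)$ and $(v, R)$ on different sides for some $v$ need not correspond to a valid disjoint pair in $\overrightarrow{G^*}$. We expect to use the strong hypothesis $\bar{\rho}_{\overrightarrow{G}}(k) = 1 - o(1/k)$ in two places: first, to argue that the top-$k$ bipartite-like clusters of $H$, and hence of $H^*$, essentially respect the $L/R$ partition up to lower-order terms; and second, to transfer the spectral gap of $\mathcal{J}_H$ into a matching gap for the directed invariant $\bar{\rho}_{\overrightarrow{G^*}}(k+1)$. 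With these ingredients in place, the nearly-linear running time and the $\widetilde{O}(n)$ edge-count bound are both inherited directly from Theorem~\ref{thm:undirected}.
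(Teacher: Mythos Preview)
Your bipartite double cover is exactly the paper's semi-double cover, and your identity $\overline{\phi}_H(A \times \{L\}, B \times \{R\}) = \overline{\phi}_{\overrightarrow{G}}(A, B)$ is correct. From that point on, however, you take a genuinely different route. The paper rewrites the same identity in complementary form as $\phi_H(A_1 \cup B_2) = 1 - \overline{\phi}_{\overrightarrow{G}}(A,B)$ (Lemma~\ref{lem:Fphi}), thereby translating directed bipartite-like clusters into \emph{low-conductance} sets in $H$, and then sparsifies with the cluster-preserving sparsifier of \citet{SZ19} (Lemma~\ref{lem:spar}) rather than with Theorem~\ref{thm:undirected}. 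Because $H$ is bipartite the two viewpoints are equivalent (for $A,B$ on opposite sides of a bipartite graph one always has $\overline{\phi}_H(A,B)+\phi_H(A\cup B)=1$), and the edge counts of the two sparsifiers coincide since spectral symmetry of $\mathcal{L}_H$ gives $2-\lambda_{|V_H|-k}(\mathcal{L}_H)=\lambda_{k+1}(\mathcal{L}_H)$.

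What each route buys is different. Your route is cleaner for the \emph{formal} statement of Theorem~\ref{thm:digraph_formal}: once you know the Chebyshev/Bernstein arguments from the proof of Theorem~\ref{thm:undirected} apply cluster-by-cluster to the lifted pairs $(\widetilde A_i,\widetilde B_i)$ (they do, since each has $\overline{\phi}_H\ge\bar\rho_{\overrightarrow{G}}(k)\ge 1/\log n$), the original pairs $(A_i,B_i)$ already witness $\bar{\rho}_{\overrightarrow{G^*}}(k) = \Omega(\bar{\rho}_{\overrightarrow{G}}(k))$ and you are done---no ``simple-set'' issue arises. The paper's route, by passing through $\rho_H(k)$, must additionally argue that every low-conductance set in $H$ (or $H^*$) can be refined to a \emph{simple} one of comparable conductance; this is the content of the calculation around \eqref{eqn:rhoH1}--\eqref{eqn:bounds}, and it is what delivers the stronger informal conclusion in Theorem~\ref{thm:digraph} that $\overrightarrow{G^*}$ contains \emph{only} $k$ directed bipartite-like clusters. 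In short, the ``main obstacle'' you flag in your last paragraph is not needed for Theorem~\ref{thm:digraph_formal} as stated---your second paragraph already finishes that proof---but it is precisely where the paper invests its effort to obtain the stronger claim, and the paper handles it through the conductance picture and the simple-set splitting, not through $\mathcal{J}_H$ as you suggest.
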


Before sketching our technique,   recall that, for undirected graphs,     the value of $k$ is proven to be identical for $G$ and $G^*$ by analysing the eigenvalues of $\mathcal{J}_G$ and $\mathcal{J}_{G^*}$ and applying the higher-order dual-Cheeger inequality~(Lemma~\ref{lem:dualCheeger}). However, a natural matrix representation for directed graphs could result in complex-valued eigenvalues, and there is no analogue of Lemma~\ref{lem:dualCheeger} for directed graphs. To overcome this, our developed algorithm is based on a novel reduction from a directed graph to an undirected one, and its reverse operation. Specifically, 
 our designed algorithm consists of the following three steps:
 \begin{enumerate}
 \item for any input digraph $\overrightarrow{G}$, the algorithm constructs an undirected graph $H$ such that every directed bipartite-like cluster defined by  $(A_i, B_i)$ in $\overrightarrow{G}$ corresponds to a low-conductance set in $H$;
 \item the algorithm constructs a sparsifier $H^*$ of $H$, such  that   $H$ and $H^*$ have the same structure of clusters;
 \item the algorithm applies the sparsified undirected graph $H^*$ to construct a directed graph $\overrightarrow{G^*}$ of $\overrightarrow{G}$ that satisfies $\bar{\rho}_{\overrightarrow{G^*}}(k) = \Omega\left(\bar{\rho}_{\overrightarrow{G}}(k)\right)$.
 \end{enumerate}
See  Figure~\ref{fig:relationship} for illustration.
 
\begin{figure}[ht]
		\centering
		\begin{tikzpicture}[scale=0.7]
			\tikzset{edge/.style = {->,> = latex'}}
			\begin{scope}[execute at begin node=$, execute at end node=$]
				\node at (-0.5,0) {\left(\overrightarrow{G^*},\bar{\rho}_{\overrightarrow{G^*}}(k) \right)} ;
				\node at (6.5,0) {\left(H^*,\rho_{H^*}(k)\right)} ;
				\node at (-0.5,4) {\left(\overrightarrow{G},\bar{\rho}_{\overrightarrow{G}}(k) \right)} ;
				\node at (6.5,4) {\left(H,\rho_{H}(k)\right)} ;
			\end{scope}
			\draw[edge,thick] (4.5,0) -- (1.5,0) node[midway, above] {reverse} node[midway, below] {semi-double cover};
			\draw[edge,thick] (1.5,4) -- (4.5,4) node[midway, above] {semi-double} node[midway, below] {cover};
			\draw[edge,thick] (6,3.5) -- (6,0.5) node[near start, right] {graph} node[midway, right] {sparsification};
			\draw[edge,thick] (0,3.5) -- (0,0.5) node[near start, left] {graph} node[midway, left] {sparsification};
		\end{tikzpicture}
		\caption{A commutative diagram of our construction. To construct $\overrightarrow{G^*}$ from $\overrightarrow{G}$, we construct graphs $H$ and $H^*$ and prove the close relationships between $\overrightarrow{G}$, $H$, $H^*$, and $\overrightarrow{G^*}$. \label{fig:relationship}}
	\end{figure}
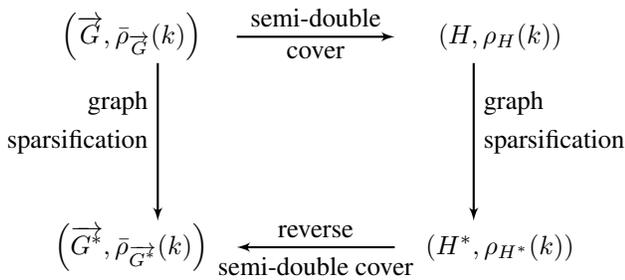

\paragraph{Constructing $H$ from $\overrightarrow{G}$.} Notice that,  to preserve    $\overline{\phi}_{\overrightarrow{G^*}} (A_i,B_i)$, the cut values  $w(A_i, B_i)$ between $A_i$ and $B_i$ need to be approximately preserved in a sparsified directed graph; this objective is  different from the most graph sparsification one, which only preserves the cut value between any set $S$ and its complement. To overcome this,      we construct an undirected graph $H$
such that every bipartite-like cluster defined by  $(A_i, B_i)$ in $\overrightarrow{G}$ corresponds to a low-conductance set in $H$. Specifically, for a weighted 
digraph ${\overrightarrow{G}} = (V_{\overrightarrow{G}},E_{\overrightarrow{G}},w_{\overrightarrow{G}})$, we   construct its semi-double cover $H = (V_H,E_H,w_H)$ as follows: (1) every vertex $v \in V_{\overrightarrow{G}}$ has two corresponding vertices $v_1,v_2 \in V_H$; (2)  for every edge $(u,v) \in E_{\overrightarrow{G}}$, we add the edge $\{u_1,v_2\}$ in $E_H$. 
See Figure~\ref{fig:sdc} for illustration.
	\begin{figure}[ht]
		\centering
		\begin{tikzpicture}[scale=0.6]
			\Vertex[x=0,y=0,opacity =.2,label=$c$]{c}
			\Vertex[x=0,y=2,opacity =.2,label=$a$]{a}
			\Vertex[x=2,y=0,opacity =.2,label=$d$]{d}
			\Vertex[x=2,y=2,opacity =.2,label=$b$]{b}
			
			\Edge[Direct](a)(b)
			\Edge[Direct](a)(c)
			\Edge[Direct](a)(d)
			\Edge[Direct](c)(d)
			\Edge[Direct](b)(c)
			
			\draw[arrows = {-Stealth[length=10pt, inset=5pt]}] (3.2,1)   -- (4.2,1);
		\end{tikzpicture}
		\hspace{1cm}
		\begin{tikzpicture}[scale=0.6]
			\Vertex[x=0,y=0,opacity =.2,label=$a_2$]{a2}
			\Vertex[x=2,y=0,opacity =.2,label=$b_2$]{b2}
			\Vertex[x=4,y=0,opacity =.2,label=$c_2$]{c2}
			\Vertex[x=6,y=0,opacity =.2,label=$d_2$]{d2}
			
			\Vertex[x=0,y=2,opacity =.2,label=$a_1$]{a1}
			\Vertex[x=2,y=2,opacity =.2,label=$b_1$]{b1}
			\Vertex[x=4,y=2,opacity =.2,label=$c_1$]{c1}
			\Vertex[x=6,y=2,opacity =.2,label=$d_1$]{d1}
			
			\Edge(a1)(b2)
			\Edge(a1)(c2)
			\Edge(a1)(d2)
			\Edge(c1)(d2)
			\Edge(b1)(c2)
		\end{tikzpicture}
		\caption{Illustration of the semi-double cover construction. A directed graph  of $n$ vertices~(left) corresponds to an undirected and bipartite graph of $2n$ vertices~(right). }
		\label{fig:sdc}
	\end{figure}
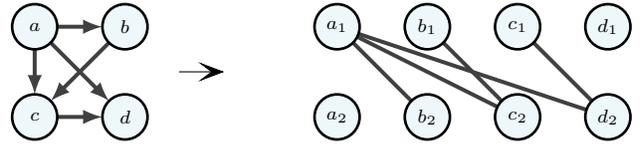

Next we analyse the properties of the reduced graph. 
 Let ${\overrightarrow{G}}$ be a directed graph with semi-double cover $H$. 
	For any $S \subset V_{\overrightarrow{G}}$, we define $S_1 \subset V_H$ and $S_2 \subset V_H$ by $S_1 \triangleq \{v_1 | v \in S\}$ and $S_2 \triangleq \{v_2 | v \in S\}$. A subset $S$ of $V_H$ is called \emph{simple} if $|\{v_1,v_2\} \cap S| \le 1$ holds for all $v \in V_{\overrightarrow{G}}$. The following lemma  develops  a   relationship between the \emph{flow ratio} from $A$ to $B$ defined by
 \begin{equation}\label{eqn:flow}
		f_{\overrightarrow{G}}(A,B) \triangleq 1 - \overline{\phi}_{\overrightarrow{G}}(A,B)
	\end{equation}
 and $\Phi_H(A_1 \cup B_2)$, for any $A,B$.

	\begin{lem}\label{lem:Fphi}
		Let ${\overrightarrow{G}}$ be a directed graph with semi-double cover $H$. Then, it holds for any $A,B \subset V_{\overrightarrow{G}}$ that $f_{\overrightarrow{G}}(A,B) = \phi_H(A_1 \cup B_2)$. Similarly, for any simple set $S \subset V_H$, let $A = \{u: u_1 \in S\}$ and $B = \{u: u_2 \in S\}$. Then, it holds that $f_{\overrightarrow{G}}(A,B) = \phi_H(S)$.
	\end{lem}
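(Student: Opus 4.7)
The plan is to translate both sides of the claimed identity into sums over the edge set of $H$, exploiting the bipartite structure of the semi-double cover. First I would record the basic degree correspondence: since every directed edge $(u,v) \in E_{\overrightarrow{G}}$ becomes an undirected edge $\{u_1, v_2\}$ in $H$, the degree of $u_1$ in $H$ equals $\deg_{\out}(u)$ and the degree of $v_2$ in $H$ equals $\deg_{\ino}(v)$. Summing over $A$ and $B$ respectively yields $\vol_H(A_1 \cup B_2) = \vol_{\out}(A) + \vol_{\ino}(B)$, which is exactly the denominator appearing in $\overline{\phi}_{\overrightarrow{G}}(A,B)$.

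Next I would compute the boundary $w_H(A_1 \cup B_2, \overline{A_1 \cup B_2})$. Because every edge of $H$ has the form $\{u_1, v_2\}$, the edges lying entirely inside $S := A_1 \cup B_2$ are precisely those coming from a directed edge $(u,v) \in E_{\overrightarrow{G}}$ with $u \in A$ and $v \in B$, so their total weight equals $w_{\overrightarrow{G}}(A,B)$. Since $\vol_H(S)$ counts every internal edge twice and every boundary edge once, a one-line rearrangement gives $w_H(S, \overline{S}) = \vol_H(S) - 2\, w_{\overrightarrow{G}}(A,B)$. Dividing by $\vol_H(S)$ and substituting the volume identity from the previous paragraph yields $\phi_H(S) = 1 - \overline{\phi}_{\overrightarrow{G}}(A,B) = f_{\overrightarrow{G}}(A,B)$, establishing the first claim.

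For the second statement, the simplicity of $S$ ensures that no vertex $v \in V_{\overrightarrow{G}}$ has both copies $v_1$ and $v_2$ in $S$, so the sets $A$ and $B$ extracted from $S$ are disjoint and satisfy $S = A_1 \cup B_2$ on the nose. The claim then reduces to the first part applied to this pair $(A,B)$.

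There is no real obstacle here; the argument is essentially bookkeeping once the degree correspondence is observed. The one delicate point worth being careful about is that internal edges in $H$ are counted twice in $\vol_H(S)$ while boundary edges are counted once, and this is precisely what produces the factor of $2$ in the numerator of $\overline{\phi}_{\overrightarrow{G}}$, so I would make sure that this is not glossed over in the final write-up.
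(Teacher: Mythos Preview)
Your proposal is correct and follows essentially the same route as the paper's proof: both set $S = A_1 \cup B_2$, use the identity $w_H(S,\bar S) = \vol_H(S) - 2\,w_H(S,S)$, and then substitute the identifications $w_H(S,S) = w_{\overrightarrow{G}}(A,B)$ and $\vol_H(S) = \vol_{\out}(A) + \vol_{\ino}(B)$. If anything, you are slightly more explicit than the paper in spelling out the degree correspondence $d_H(u_1)=\deg_{\out}(u)$, $d_H(v_2)=\deg_{\ino}(v)$ and in explaining why simplicity is needed for the second statement; the paper simply asserts the key substitution and says the second part ``follows by the similar argument.''
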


Lemma~\ref{lem:Fphi} proves a one-to-one correspondence between any bipartite-like cluster in  $\overrightarrow{G}$ and a vertex set in $H$. Building on this, we prove that this one-to-one correspondence can be  generalised between any $k$ bipartite-like clusters  in $\overrightarrow{G}$ and $k$ disjoint vertex sets in $H$. Moreover,  the structure of $k$ bipartite-like clusters   in $\overrightarrow{G}$ is preserved by a collection of $k$ disjoint vertex sets of low conductance in $H$. 
 
\begin{lem}\label{lem:directed_reduction}
For any  directed and weighted graph $\overrightarrow{G} = (V_{\overrightarrow{G}},E_{\overrightarrow{G}},w_{\overrightarrow{G}})$ and $k\in\mathbb{N}$, it holds that \begin{equation}\label{eqn:rhoG-rhoH}
		\bar{\rho}_{\overrightarrow{G}}(k) = 1 -  \min_{C_1,\ldots,C_k} \max_{1 \le i \le k} \phi_H(C_i),
	\end{equation}
	where the minimum is taken over  $k$ disjoint simple subsets of $V_H$ defined by $C_i = A_{i_1} \cup B_{i_2}$ for $1 \le i \le k$. 
\end{lem}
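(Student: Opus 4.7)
The plan is to establish~\eqref{eqn:rhoG-rhoH} by lifting the pointwise identity of Lemma~\ref{lem:Fphi} to the $\max/\min$ over admissible configurations on both sides. First I would set up the correspondence: given $k$ pairs $(A_1,B_1),\ldots,(A_k,B_k)$ satisfying the disjointness constraints in the definition of $\bar{\rho}_{\overrightarrow{G}}(k)$, define $C_i\triangleq A_{i_1}\cup B_{i_2}\subset V_H$. The implicit assumption $A_i\cap B_i=\emptyset$ (required for $\overline{\phi}_{\overrightarrow{G}}(A_i,B_i)$ to be meaningful) makes each $C_i$ simple, while $A_i\cap A_j=\emptyset$ and $B_i\cap B_j=\emptyset$ for $i\neq j$ make the $C_i$ pairwise disjoint; the cross-intersections $A_{i_1}\cap B_{j_2}$ are vacuously empty since they live in distinct halves of $V_H$. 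Conversely, from $k$ disjoint simple subsets $\{C_i\}$, setting $A_i\triangleq\{v:v_1\in C_i\}$ and $B_i\triangleq\{v:v_2\in C_i\}$ produces pairs satisfying the same disjointness relations.

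The second step is to apply Lemma~\ref{lem:Fphi} along this correspondence, giving
\[
\phi_H(C_i)\;=\;f_{\overrightarrow{G}}(A_i,B_i)\;=\;1-\overline{\phi}_{\overrightarrow{G}}(A_i,B_i)
\]
for each $i$, so that $\max_i\phi_H(C_i)=1-\min_i\overline{\phi}_{\overrightarrow{G}}(A_i,B_i)$. Minimising the left-hand side over admissible $\{C_i\}$ therefore corresponds to maximising $\min_i\overline{\phi}_{\overrightarrow{G}}(A_i,B_i)$ over admissible $\{(A_i,B_i)\}$, which is $\bar{\rho}_{\overrightarrow{G}}(k)$ by definition; rearranging yields~\eqref{eqn:rhoG-rhoH}.

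The main obstacle I anticipate is a mild mismatch between the two sets of admissible configurations. The definition of $\bar{\rho}_{\overrightarrow{G}}(k)$ additionally requires $A_i\cap B_j=\emptyset$ for $i\neq j$, whereas two disjoint simple subsets $C_i,C_j\subset V_H$ can still contain $v_1$ and $v_2$ for a common underlying vertex $v$, so the extracted pairs need not satisfy this cross-disjointness automatically. I would close the argument by showing that the extremum in~\eqref{eqn:rhoG-rhoH} is attained on configurations respecting the extra constraint, via a local reassignment of any offending vertex $v\in A_i\cap B_j$ to just one of the two clusters, performed so as to leave $\min_i\overline{\phi}_{\overrightarrow{G}}$ no smaller on the $\overrightarrow{G}$-side and $\max_i\phi_H$ no larger on the $H$-side. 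Once that routine symmetrisation is in hand, the result follows from the combination of Lemma~\ref{lem:Fphi} with the $\max/\min$ manipulation above.
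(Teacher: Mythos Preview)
Your approach is essentially the paper's: rewrite $\overline{\phi}_{\overrightarrow{G}}(A_i,B_i)=1-f_{\overrightarrow{G}}(A_i,B_i)$, pull the ``$1-$'' through the $\max/\min$, and invoke Lemma~\ref{lem:Fphi} with $C_i=A_{i_1}\cup B_{i_2}$. The obstacle you flag in your third paragraph does not arise in the paper's reading of the statement: the minimisation on the right-hand side is taken over exactly those $C_i$ \emph{defined by} admissible pairs $(A_i,B_i)$ from the left-hand side, so the cross-disjointness $A_i\cap B_j=\emptyset$ is inherited by construction and no symmetrisation step is needed.
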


\paragraph{Sparsification of $H$.} Next we   construct a sparse representation of $H$, denoted by $H^*$, such that the $k$ vertex sets of low conductance is preserved in $H^*$. To achieve this, we apply the following result to construct a cluster-preserving sparsifier.

\begin{lem}[\citep{SZ19}]\label{lem:spar}		There exists a nearly-linear time algorithm that, given a graph $G = (V,E,w)$ with $k$ clusters as input, with probability at least $9/10$, computes a sparsifier $H = (V,F \subset E, \widetilde{w})$ with $|F| = O((1/\lambda_{k+1}) \cdot n \log n)$ edges such that the following holds:  (1) 
  it holds for any $1 \le i \le k$ that $\phi_H(S_i) = O(k \cdot \phi_G(S_i))$, where $S_1,\cdots,S_k$ are the optimal clusters in $G$ that achieves $\rho(k)$; 			(2) $\lambda_{k+1}(\mathcal{L}_H) =\Omega(\lambda_{k+1}(\mathcal{L}_G))$. 		  \end{lem}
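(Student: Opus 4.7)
The plan is to prove Lemma~\ref{lem:spar} by an edge-sampling routine analogous to the one used in Section~\ref{sec:undirected} but adapted to the standard (rather than dual) Cheeger regime. Concretely, each edge $e=\{u,v\}$ is sampled independently with probability
\[
p_e \triangleq \min\!\left\{1,\ \frac{C\log n}{\lambda_{k+1}(\mathcal{L}_G)}\cdot w(u,v)\!\left(\frac{1}{d_G(u)}+\frac{1}{d_G(v)}\right)\right\},
\]
and the kept edge is re-weighted to $\widetilde{w}(e)=w(e)/p_e$. Since each $p_e$ uses only local data together with one global scalar $\lambda_{k+1}$, which can be estimated in nearly-linear time by a truncated power method, the entire sampler runs in nearly-linear time. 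The expected number of kept edges is
\[
\sum_e p_e \le \frac{C\log n}{\lambda_{k+1}}\sum_{v}\sum_{u\sim v}\frac{w(u,v)}{d_G(v)} = O\!\left(\frac{n\log n}{\lambda_{k+1}}\right),
\]
matching the claimed bound after Markov's inequality.

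For the cut guarantee~(1), I would analyse $X_i = w_H(S_i,\overline{S_i}) = \sum_{e\in\partial S_i}Y_e$ with $Y_e=w(e)/p_e$ with probability $p_e$ and $0$ otherwise, so that $\E[X_i]=w_G(S_i,\overline{S_i})$. Using $1/p_e \le \lambda_{k+1}\min\{d_G(u),d_G(v)\}/(C\log n\cdot w(e))$ together with the fact that $\min\{d_G(u),d_G(v)\}\le \vol_G(S_i)$ for any boundary edge of the smaller side,
\[
\mathrm{Var}(X_i)\le \frac{\lambda_{k+1}\vol_G(S_i)}{C\log n}\cdot w_G(S_i,\overline{S_i}),
\]
so that $\mathrm{Var}(X_i)/\E[X_i]^2 = O(\lambda_{k+1}/(\phi_G(S_i)\log n))$. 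A Chebyshev bound with deviation $c=\sqrt{k}$, combined with a union bound over the $k$ clusters and a parallel computation showing $\vol_H(S_i)=\Theta(\vol_G(S_i))$, then gives the $\phi_H(S_i)=O(k\cdot \phi_G(S_i))$ guarantee; clusters with anomalously small $\phi_G(S_i)$ are handled by noting that $w_G(S_i,\overline{S_i})$ is correspondingly small, so that even a crude Markov bound on $X_i$ suffices to maintain the ratio up to the advertised factor of $k$.

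Statement~(2) is the main obstacle, and cannot be obtained by a full spectral-sparsification argument, which would require $\Omega(n\log n/\lambda_2)$ edges. Instead I would establish a \emph{restricted} spectral approximation on the top-$(n-k)$ eigenspace. Let $\Pi$ be the orthogonal projection onto $\mathrm{span}(f_{k+1},\ldots,f_n)$; because this is an invariant subspace of $\mathcal{L}_G$, the projection commutes with $\mathcal{L}_G^{-1}$ and so
\[
b_e^\intercal \Pi\mathcal{L}_G^{-1}\Pi\, b_e \le \frac{\|b_e\|^2}{\lambda_{k+1}} = \frac{1}{\lambda_{k+1}}\!\left(\frac{1}{d_G(u)}+\frac{1}{d_G(v)}\right),
\]
which is exactly the edge importance encoded in $p_e$ up to the $\log n$ slack demanded by matrix Chernoff. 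Applying Tropp's matrix Chernoff inequality to
\[
\Pi\mathcal{L}_H\Pi - \Pi\mathcal{L}_G\Pi = \sum_e\!\left(\frac{\mathbf{1}[e\text{ sampled}]}{p_e}-1\right)\!w(e)\,\Pi b_eb_e^\intercal\Pi,
\]
the per-edge operator-norm bound is $\lambda_{k+1}/(C\log n)$ by the projected-resistance calculation, while the smallest nonzero eigenvalue of $\Pi\mathcal{L}_G\Pi$ on $\mathrm{range}(\Pi)$ is $\lambda_{k+1}$; the two match precisely, giving $\Pi\mathcal{L}_H\Pi \succeq \tfrac12\Pi\mathcal{L}_G\Pi$ with high probability. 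Invoking Courant--Fischer for $\mathcal{L}_H$ with the certificate subspace $\mathrm{range}(\Pi)$ of dimension $n-k$ then yields $\lambda_{k+1}(\mathcal{L}_H)\ge \tfrac12\lambda_{k+1}(\mathcal{L}_G)$. The crucial subtlety is that the \emph{unprojected} effective resistance $b_e^\intercal\mathcal{L}_G^{-1}b_e$ of an intra-cluster edge can be as large as $1/\lambda_2$, which would force a far larger sample; projecting onto the top-$(n-k)$ eigenspace annihilates the small-eigenvalue contribution, and it is precisely this gap between $\lambda_k$ and $\lambda_{k+1}$ that makes cluster-preserving sparsification possible with $\widetilde{O}(n)$ edges.
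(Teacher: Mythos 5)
First, a framing point: the paper never proves Lemma~\ref{lem:spar} --- it is imported as a black box from \citet{SZ19} and used in Section~\ref{sec:directed}; the closest in-paper analogue is the Appendix~\ref{sec:proof1} argument for the signless matrix $\mathcal{J}$. Your reconstruction follows essentially that strategy (degree-based oversampling at rate $\log n/\lambda_{k+1}$, second-moment control of cut weights, Bernstein for degrees, and a matrix Chernoff bound restricted to the top $(n-k)$ eigenspace followed by Courant--Fischer). For part~(1) your argument works, although the Chebyshev step with $c=\sqrt{k}$ is not generally valid --- without a lower bound on $\phi_G(S_i)$ the ratio $\lambda_{k+1}/(\phi_G(S_i)\log n)$ need not be small --- but this step is also unnecessary: your Markov fallback alone gives $w_H(S_i,\overline{S_i})\le O(k)\cdot w_G(S_i,\overline{S_i})$ for all $i$ simultaneously with probability at least $9/10$, which together with the volume preservation yields exactly the stated $\phi_H(S_i)=O(k\cdot\phi_G(S_i))$.

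The genuine gap is in part~(2), in the displayed identity
\begin{equation*}
\Pi\mathcal{L}_H\Pi-\Pi\mathcal{L}_G\Pi=\sum_e\Bigl(\tfrac{\mathbf{1}[e\ \text{sampled}]}{p_e}-1\Bigr)w(e)\,\Pi b_e b_e^{\intercal}\Pi .
\end{equation*}
Since $b_e$ is built from the degrees of $G$, the sampled quadratic form $\sum_{e\in F}w_H(e)\,b_e b_e^{\intercal}$ equals $D_G^{-1/2}(D_H-A_H)D_G^{-1/2}$, i.e.\ the sparsifier's Laplacian normalised by the degrees of $G$, and \emph{not} $\mathcal{L}_H=D_H^{-1/2}(D_H-A_H)D_H^{-1/2}$. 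Your matrix Chernoff argument therefore certifies a Rayleigh-quotient lower bound on $\mathrm{range}(\Pi)$ only for this $G$-normalised matrix, and Courant--Fischer applied to it does not directly bound $\lambda_{k+1}(\mathcal{L}_H)$. The missing step is the transfer: prove $d_H(u)=\Theta(d_G(u))$ for every vertex (your Bernstein bound gives this), then conjugate by the full-rank map $D_G^{1/2}D_H^{-1/2}$ and push the $(n-k)$-dimensional certificate subspace through it; this is precisely what the paper does in \eqref{eqn:xy}--\eqref{eqn:LG-LH-2} for $\mathcal{J}$, and what \citet{SZ19} do for $\mathcal{L}$. A smaller technical point: the Chernoff bound must be applied on the $(n-k)$-dimensional subspace (or after normalising by $\mathcal{P}^{-1/2}$ so that the expectation becomes the projection itself), because $\lambda_{\min}$ of the full $n\times n$ expectation $\Pi\mathcal{L}_G\Pi$ is $0$; with that restriction your ratio $\mu_{\min}/R=\Omega(\log n)$ is correct. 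With these two repairs your outline matches the cited proof.
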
  

\paragraph{Constructing $\overrightarrow{G^*}$ from $H^*$.} 
Finally, we  construct a directed graph $\overrightarrow{G^*}$ from $H^*$ such that the original $k$ directed bipartite-like clusters   in $\overrightarrow{G}$ is preserved in $\overrightarrow{G^*}$. To achieve this, we introduce the following \emph{reverse semi-double cover}:
\begin{defn}[reverse semi-double cover]
Given any double cover graph $H^*=(V_{H^*}, E_{H^*}, w_{H^*})$ as input, the reverse semi-double cover of $H^*$ is a directed graph $\overrightarrow{G^*}=(V_{\overrightarrow{G^*}}, E_{\overrightarrow{G^*}}, w_{\overrightarrow{G^*}})$ constructed as follows:
\begin{itemize}
    \item every pair of vertices $u_1$ and $u_2$ in $V_{H^*}$ corresponds to a vertex $v\in V_{\overrightarrow{G^*}}$;
    \item we add  an edge $(u, v)$ to $E_{\overrightarrow{G}}$ if there is edge $\{u_1, v_2\}\in E_{H^*}$, and set $w_{\overrightarrow{G^*}}(u,v)= w_{H^*}(u_1, v_2)$.
\end{itemize}
\end{defn}
One   might think that the reverse double cover plays an exact opposite role of the double cover, however  it is  not the case.
In particular, while our constructed  subsets $C_1,\ldots, C_k$ in the first step  are always simple   in $H$~(cf.~Lemma~\ref{lem:directed_reduction}),  the $k$ subsets corresponding to $\rho_H(k)$ are not necessarily   simple. As a result, 
	\[
	\min_{C_1,\ldots,C_k} \max_{1 \le i \le k} \phi_H(C_i) = \rho_H(k)
	\]
 doesn't hold in general, and there is no direct correspondence between $C_1,\ldots, C_k$ in $H$ and the $k$ directed 
 bipartite-like clusters  in $\overrightarrow{G^*}$ that correspond to $\bar{\rho}_{\overrightarrow{G^*}}(k)$.

To analyse  $\rho_{\overrightarrow{{G^*}}}(k)$, for any set $S \subset V_H$ we   partition the set into two subsets $S_1$ and $S_2$ defined by $S_1 = S \cap (A_{i_1} \cup B_{i_2})$ and $S_2 = S \cap (A_{i_2} \cup B_{i_1})$.
    For example, following Figure~\ref{fig:sdc}, if   $A_i = \{a,c\}$ and  $B_i = \{b,d\}$ and the set $S \subset V_H$ is $S = \{a_1,b_1,b_2,c_1,c_2\}$, then we have $S_1 = \{a_1,b_2,c_1\}$ and $S_2 = \{b_1,c_2\}$. 
    As $A_i$ and $B_i$ are densely connected in $H$, there are   few edges within   $A_i$ and $B_i$ for $1 \le i \le k$. 
	Hence,  there are very few edges between $S_1$ and $S_2$ for any  $S \subset V_H$. Without loss of generality, we   assume that 
	\[
	\frac{2w_H(S_1,S_2)}{w_H(S_1,\bar{S_1})+w_H(S_2,\bar{S_2})} \le c
	\]
 for some constant $c<1$. 
	Simplifying the inequality above  we get \begin{align*}
	& w_H(S_1,\bar{S_1})+w_H(S_2,\bar{S_2})-2w_H(S_1,S_2)\\
    & \ge (1-c)\cdot \left[w_H(S_1,\bar{S_1})+w_H(S_2,\bar{S_2})\right].
	\end{align*}
	Thus, for any not  necessarily simple vertex set   $S \subset V_H$  we have
	\begin{align*}
		\phi_H(S) &= \frac{w_H(S,\bar{S})}{\vol(S)}\\
		& = \frac{w_H(S_1,\bar{S_1})+w_H(S_2,\bar{S})-2w_H(S_1,S_2)}{\vol(S_1)+\vol(S_2)}\\
		&\ge (1-c) \cdot \min \left\{\frac{w_H(S_1,\bar{S_1})}{\vol(S_1)} , \frac{w_H(S_2,\bar{S_2})}{\vol(S_2)}\right\}\\
        &= (1-c) \cdot \min \left\{\phi_H(S_1),\phi_H(S_2)\right\},
	\end{align*}
	where the last inequality follows by the median inequality. Thus, for every   $S \subset V_H$, there is  a simple set $T \subset V_H$ such that $\phi_H(S) \ge (1-c) \cdot \phi_H(T)$. Moreover, for any collection of $k$-disjoint sets $S_1,S_2,\cdots,S_k$, where $S_i \subset V_H$ we have a collection of $k$-disjoint simple sets $T_1,T_2,\cdots,T_k$, where $T_i \subset V_H$, such that 	\[
	\max_{1 \le i \le k} \phi_H(S_i) \ge (1-c) \cdot \max_{1 \le i \le k} \phi_H(T_i). \]
	Taking minimum over all such collection of $k$-disjoint subsets of $V_H$ gives us that 
	\begin{align*}
		&\min_{S_1,S_2,\cdots,S_k} \max_{1 \le i \le k} \phi_H(S_i) \\
        &= \rho_H(k) \ge (1-c) \cdot \min_{T_1,T_2,\cdots,T_k} \max_{1 \le i \le k} \phi_H(T_i),
	\end{align*}
	where in the second half of the inequality the minimum is taken over collection of $k$-disjoint simple subsets of $V_H$. On one hand, rearranging the above inequality we have
	\begin{equation}\label{eqn:rhoH1}
		\frac{1}{1-c} \cdot	\rho_H(k) \ge  \min_{T_1,T_2,\cdots,T_k} \max_{1 \le i \le k} \phi_H(T_i),
	\end{equation}
	and on the other hand, since the collection of $k$-disjoint simple subsets of $V_H$ is a sub-collection of the collection of $k$-disjoint subsets of $V_H$, we have 
	\begin{equation}\label{eqn:rhoH2}
		\min_{T_1,T_2,\cdots,T_k} \max_{1 \le i \le k} \phi_H(T_i) \ge \rho_H(k).
	\end{equation}
	Thus, combining \eqref{eqn:rhoH1} and \eqref{eqn:rhoH2}, we have 
	\begin{equation}\label{eqn:rhoH3}
		\frac{1}{1-c} \cdot	\rho_H(k) \ge \min_{T_1,T_2,\cdots,T_k} \max_{1 \le i \le k} \phi_H(T_i) \ge \rho_H(k).
	\end{equation}
	Further, combining \eqref{eqn:rhoG-rhoH} and \eqref{eqn:rhoH3} we have 
	\begin{equation}\label{eqn:bounds}
		1 - \frac{1}{1-c} \cdot \rho_H(k) \le \bar{\rho}_{\overrightarrow{G}}(k) \le 1- \rho_H(k).
	\end{equation}

\begin{figure*}[t]
    \centering
    \begin{subfigure}{.42\textwidth}
    \centering
    \includegraphics[width=.95\linewidth]{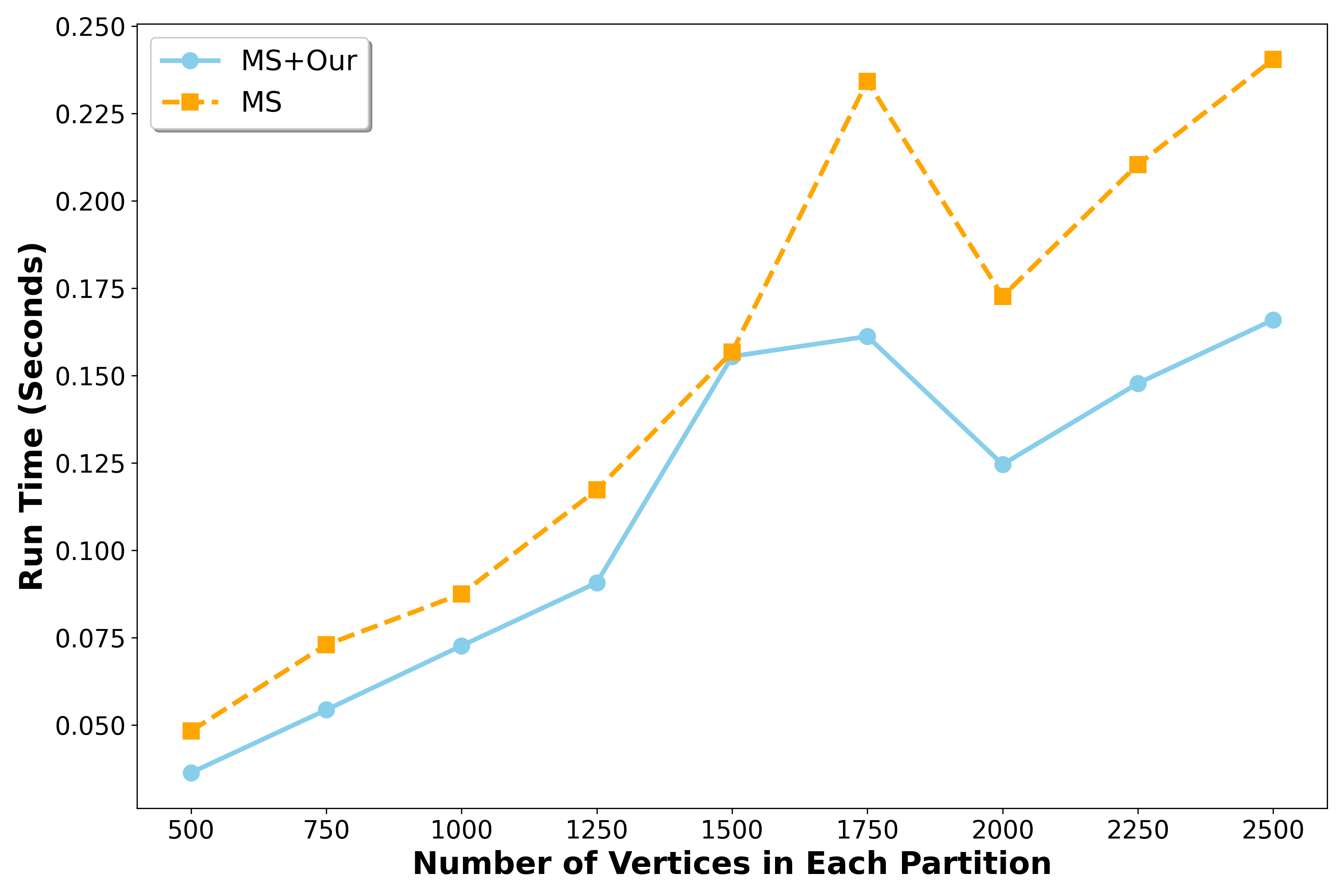} 
    \caption{Runtime comparison}
    \label{SUBFIGURE LABEL 1}
    \end{subfigure}
    \begin{subfigure}{.42\textwidth}
    \centering
    \includegraphics[width=.95\linewidth]{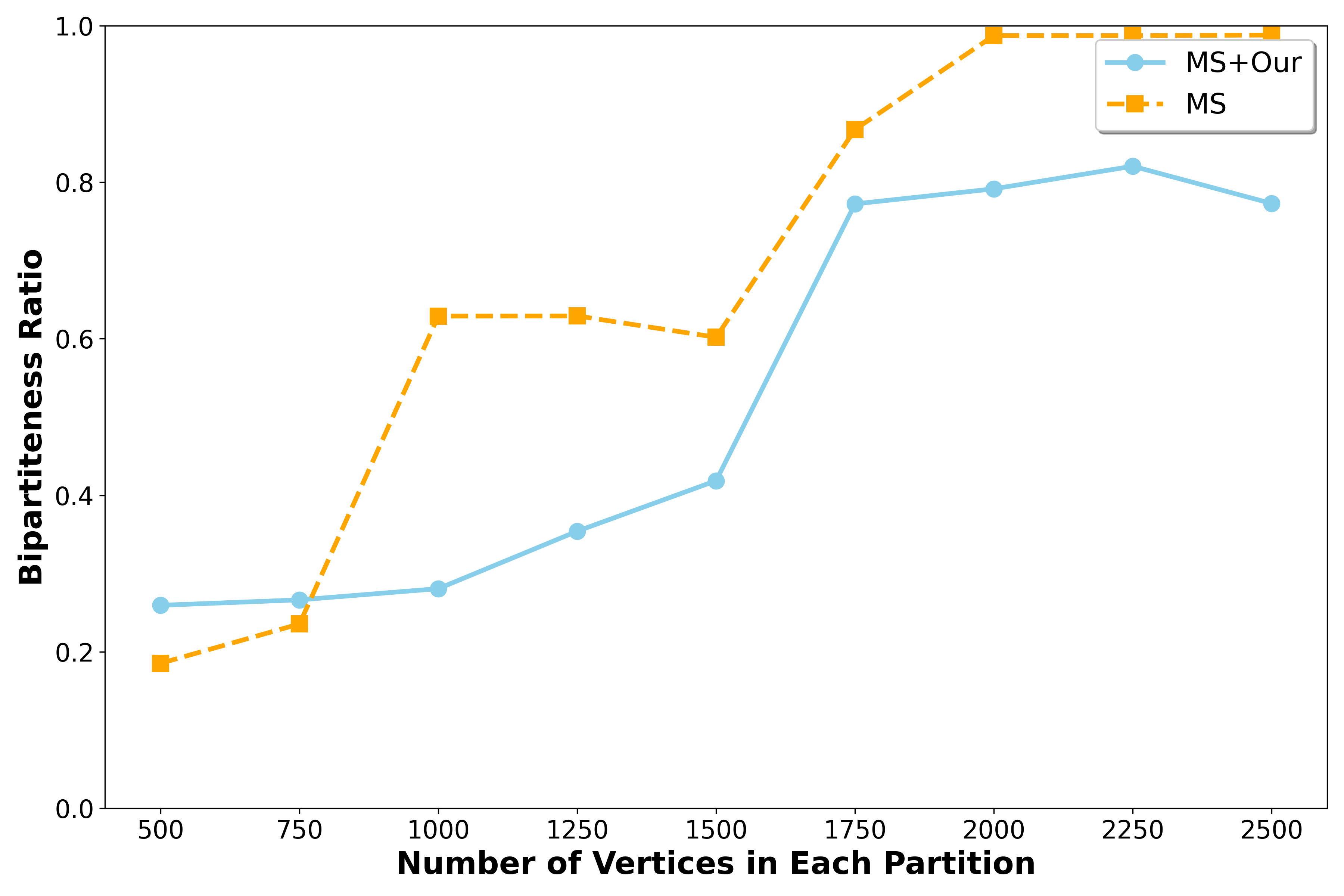}  
    \caption{Bipartiteness Ratio comparison}
    \label{SUBFIGURE LABEL 2}
    \end{subfigure}
    \caption{Runtime and bipartiteness comparison between \textsf{MS} and our algorithm by fixing $p = 0.3$, $q = 0.1p$ and varying the number of vertices  between $500$ and $2,500$ in each partition. }
\label{fig:exp_result1}
\end{figure*}

\paragraph{Proof of Theorem~\ref{thm:digraph}.} Now we are ready to prove Theorem~\ref{thm:digraph}. 
	Since $\overrightarrow{G}$ is a directed graph with $k$ bipartite-like clusters, the value of $\bar{\rho}_{\overrightarrow{G}}(k)$ is high;
  together with \eqref{eqn:bounds}, this implies that $\rho_H(k) = o(1)$. By  Lemma~\ref{lem:spar}, we know that there exists a sparsifier $H^*$ of $H$,   such that $\rho_{{H^*}}(k) = O(k \cdot \rho_{H}(k))$. Thus, we can conclude that $\rho_{H^*}(k) = o(1)$. Hence, applying \eqref{eqn:bounds} for $\overrightarrow{{G^*}}$ and $H^*$ we have
	\begin{equation}\label{eqn:bounds1}
		1 - \frac{1}{1-c} \cdot \rho_{H^*}(k) \le \bar{\rho}_{\overrightarrow{G^*}}(k) \le 1- \rho_{H^*}(k).
	\end{equation}
	Finally, using the fact that $\rho_{H^*}(k) = o(1)$, we   conclude that $\bar{\rho}_{\overrightarrow{G^*}}(k)$ is  close to $1$ and hence the structure of $\overrightarrow{G}$ will be preserved in $\overrightarrow{G^*}$. Moreover, by the construction of $H$, and $H^*$, and $\overrightarrow{G^*}$, the value of $k$ is preserved. 

 For the running time, notice that all the intermediate graphs $H$ and $H^*$ can be constructed locally, and   it's sufficient to examine every edge of the input graph $\overrightarrow{G}$ once throughout the execution of the algorithm. This implies the nearly-linear running time of our overall algorithm.  
 Combining everything above above proves Theorem~\ref{thm:digraph}.

\section{Experiments \label{sec:experiment}}	
We evaluate the performance of our proposed algorithms on synthetic  and real-world datasets. We employ the algorithms presented in \citep{MS21} as the baseline algorithms, and examine the speedup of their algorithms when applying our sparsification algorithms as subroutines. 
Notice that, as all the involved operations of our algorithms can be performed locally, one can run our graph sparsification algorithms online while exploring the underlying graph with a local algorithm. For ease of presentation, in this section we call the  local algorithm in \citep{MS21} with our sparsification framework our algorithm. 
All experiments were performed on a HP ZBook Studio with 11th Gen Intel(R) Core(TM) i7-11800H @ 2.30GHz processor and 32 GB of RAM. Our code can be downloaded
from \href{https://github.com/suranjande4/Online-Sparsification-of-Bipartite-Like-Clusters-in-Graphs}{https://github.com/suranjande4/Online-Sparsification-of-Bipartite-Like-Clusters-in-Graphs}.
\subsection{Results for Undirected Graphs}

\noindent \textbf{Synthetic Dataset.} We compare the performance of our algorithm with the   \textsc{LocBipartDC} algorithm presented in \cite{MS21}, which we refer to as \textsc{MS}, on synthetic graphs generated from the stochastic block model (SBM). Specifically, we assume that the graph has $k = 2$ clusters, say $C_1,C_2$, and the number of vertices in each cluster, denoted by $n_1$ and $n_2$ respectively, satisfies $n_1 = n_2$. Moreover, any pair of vertices $u \in C_i$ and $v \in C_j$ is connected with probability $p_{ij}$. We assume that $p_{12} = p_{21} = p$ and $p_{11} = p_{22} = q$, where $q = 0.1p$. Throughout the experiments, we leave the parameters $n$ and $p$ free but maintain the above relations. 

    Our algorithm sparsifies the underlying graph and   simultaneously applies the \textsc{MS} algorithm.
    We evaluate the quality of the output $(L,R)$ returned by each algorithm with respect to its bipartiteness ratio defined by $
    \beta(L,R) = 1 -  \overline{\phi}(L, R)$. 
    All our reported results are the average performance of each algorithm over $10$ runs, in which a random vertex from $C_1 \cup C_2$ is chosen as the starting vertex of the algorithm.   We generate graphs from the SBM such that $q = 0.1p$ and vary the size of the target set by varying $n_1$ between $500$ and $2,500$. In  Figure~\ref{fig:exp_result1}, we fix the probability $p = 0.3$ and vary the number of vertices $n_1 = n_2$ and compare both runtime and the bipartiteness ratio between the \textsc{MS} algorithm and our algorithm. One can observe that for a fixed probability $p$ as we increase the number of vertices, our algorithm takes much less time than the \textsc{MS} algorithm and maintains a similar bipartiteness ratio with the \textsc{MS} algorithm.


\begin{table*}[ht]
\centering
\caption{Comparison of \textsc{MS} with our algorithm on the Militarised Interstate Disputes Dataset. We use the vertices corresponding to the listed countries in the first column as the seed vertex of the local algorithm.}
	\begin{tabular}{ccccc}
		\hline
		\textsc{Country Name}           & \textsc{MS}     & \textsc{MS}        & \textsc{Our Algo.} & 
        \textsc{Our Algo.}    \\  
		            &    \textsc{Runtime} & \textsc{Bipartiteness} & \textsc{Runtime} & \textsc{Bipartiteness} \\ \hline
		\textsc{USA}        
		            & 0.034  & 0.292      & 0.0044  & 0.285      \\ \hline
		\textsc{Netherlands}     
		            & 0.0351   & 0.307      & 0.0042   &  0.281     \\ \hline
		\textsc{Lithuania}      
		            & 0.0336   & 0.303       & 0.0043   & 0.165       \\ \hline
	\end{tabular}
    
  \label{tab:2}
\end{table*}

\begin{figure*}[ht]
\centering
\begin{subfigure}{.42\textwidth}
    \centering
    \includegraphics[width=.95\linewidth]{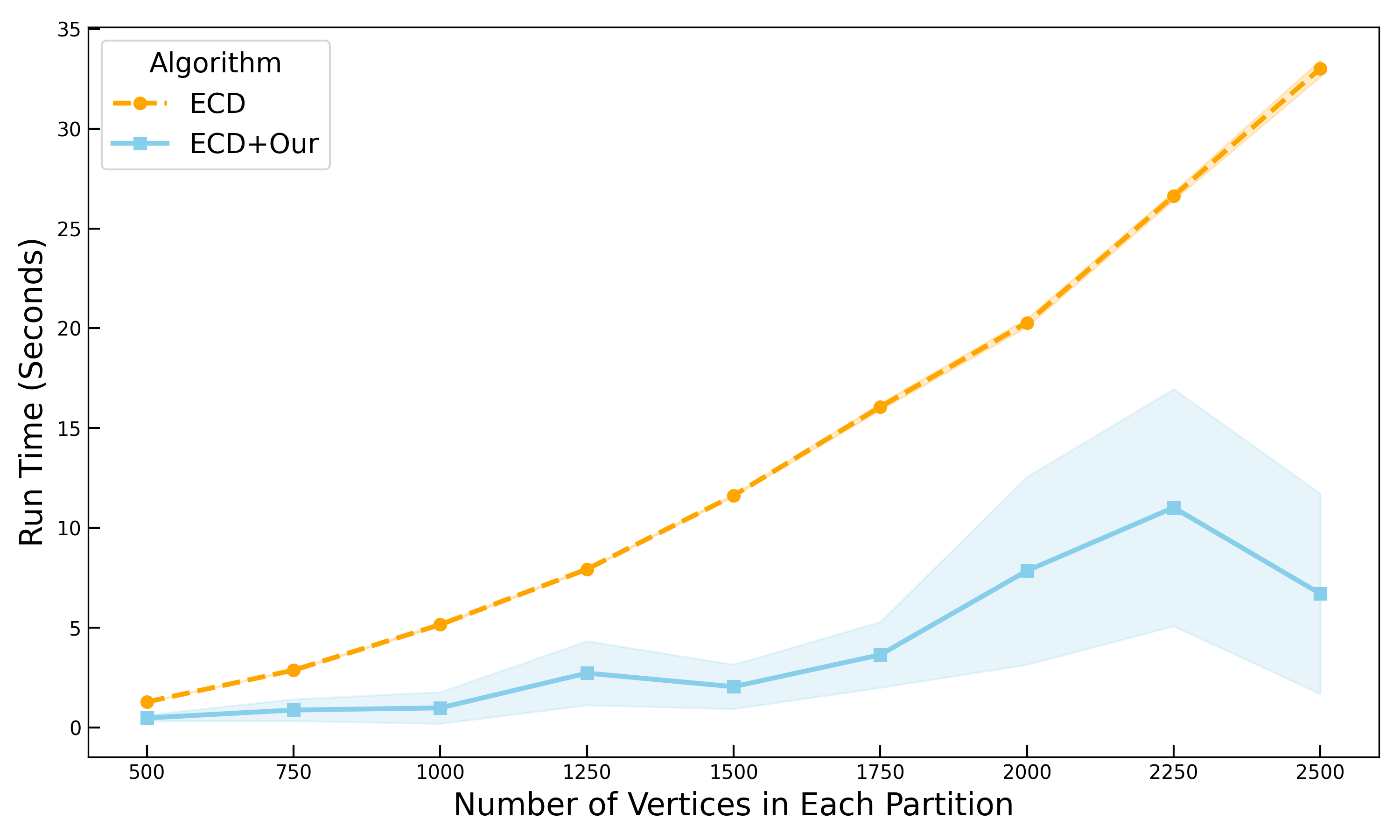}  
    \caption{Runtime comparison}
    \label{SUBFIGURE LABEL 11}
\end{subfigure}
\begin{subfigure}{.42\textwidth}
    \centering
    \includegraphics[width=.95\linewidth]{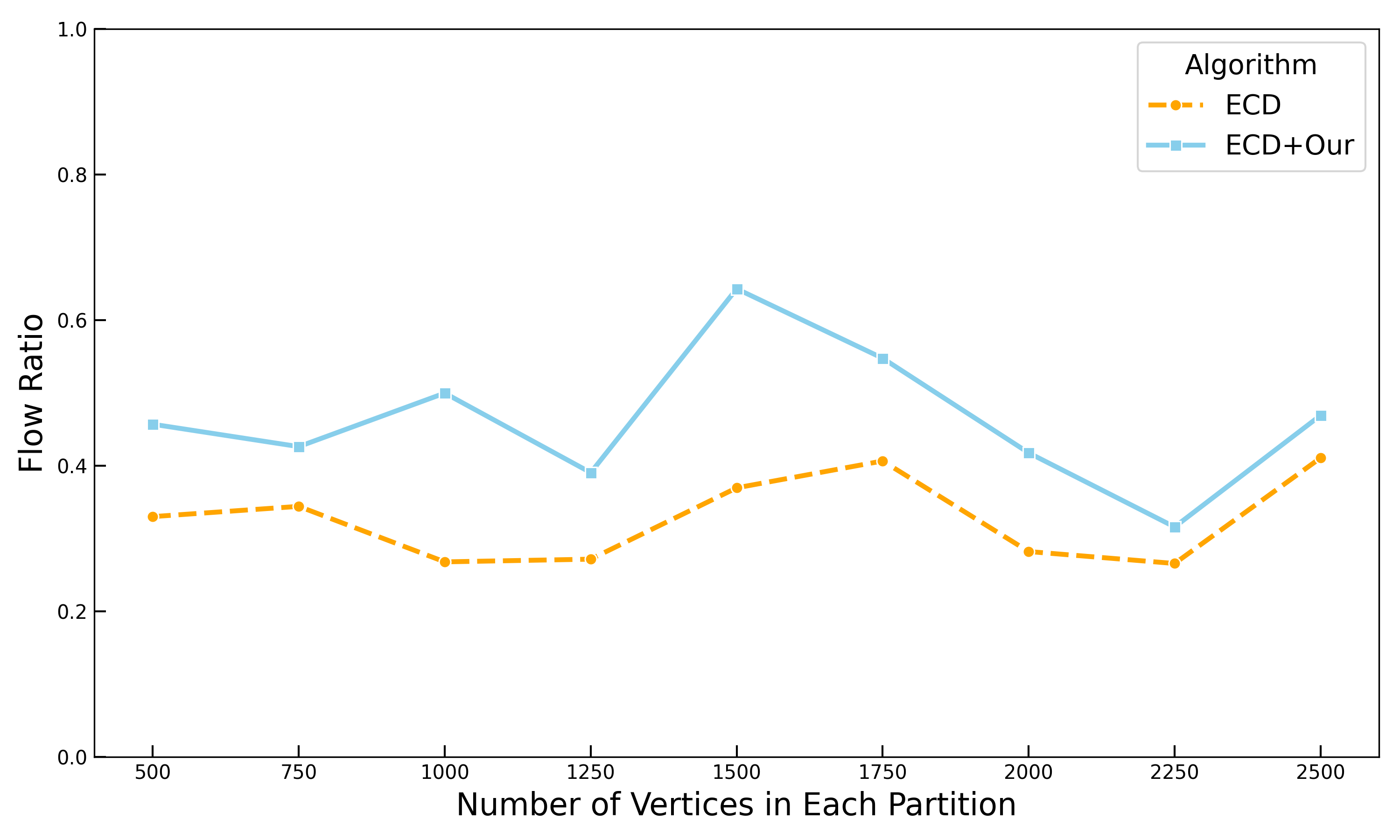}  
    \caption{Flow-ratio comparison}
    \label{SUBFIGURE LABEL 21}
\end{subfigure}
\caption{Runtime and flow-ratio comparison between \textsf{ECD} and our algorithm by fixing $\eta = 0.7$ and varying the number of vertices in each partition from $500$ to $2,500$.\label{fig:fig-4}} 
\end{figure*}

\noindent \textbf{Real-world Dataset.} We evaluate the performance of our algorithm on the Dyadic Militarised Interstate Disputes Dataset (v3.1)~\cite{midDataset}, which records every interstate dispute during 1900–1950, including the level of hostility resulting from the dispute and the number of casualties. We construct a graph from the dataset as follows: every country is represented by a vertex; two vertices are connected by an edge with weight $30$ if there is a war between the corresponding countries, and the two vertices are connected by an edge with weight $1$ if the corresponding countries have other dispute which is not part of an interstate war.  We set $\gamma=0.02$ for the \textsc{MS} algorithm, and Table~\ref{tab:2} compares the performance of the   \textsc{MS} algorithm with ours. This shows that   our algorithm takes much less time than the \textsc{MS} algorithm and maintains a similar bipartiteness ratio.   
	


\begin{figure*}[ht]
\centering
\begin{subfigure}{.32\textwidth}
    \centering
    \includegraphics[width=.95\linewidth]{Figures/runtime_eta_0.7-3.png}  
    \caption{$\eta = 0.7$}
    \label{SUBFIGURE LABEL 22}
\end{subfigure}
\begin{subfigure}{.32\textwidth}
    \centering
    \includegraphics[width=.95\linewidth]{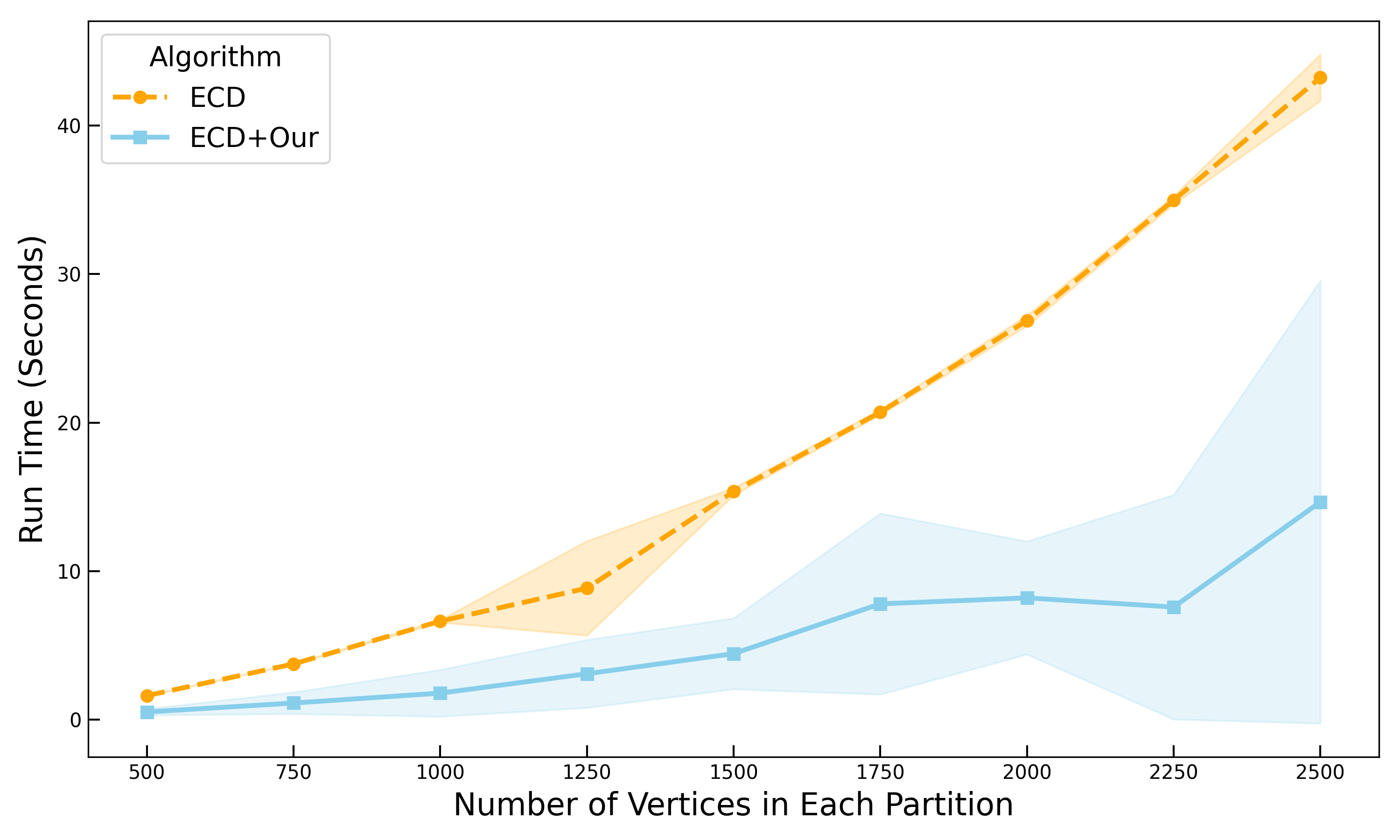}  
    \caption{for $\eta = 0.8$}
    \label{SUBFIGURE LABEL 23}
\end{subfigure}
\begin{subfigure}{.32\textwidth}
    \centering
    \includegraphics[width=.95\linewidth]{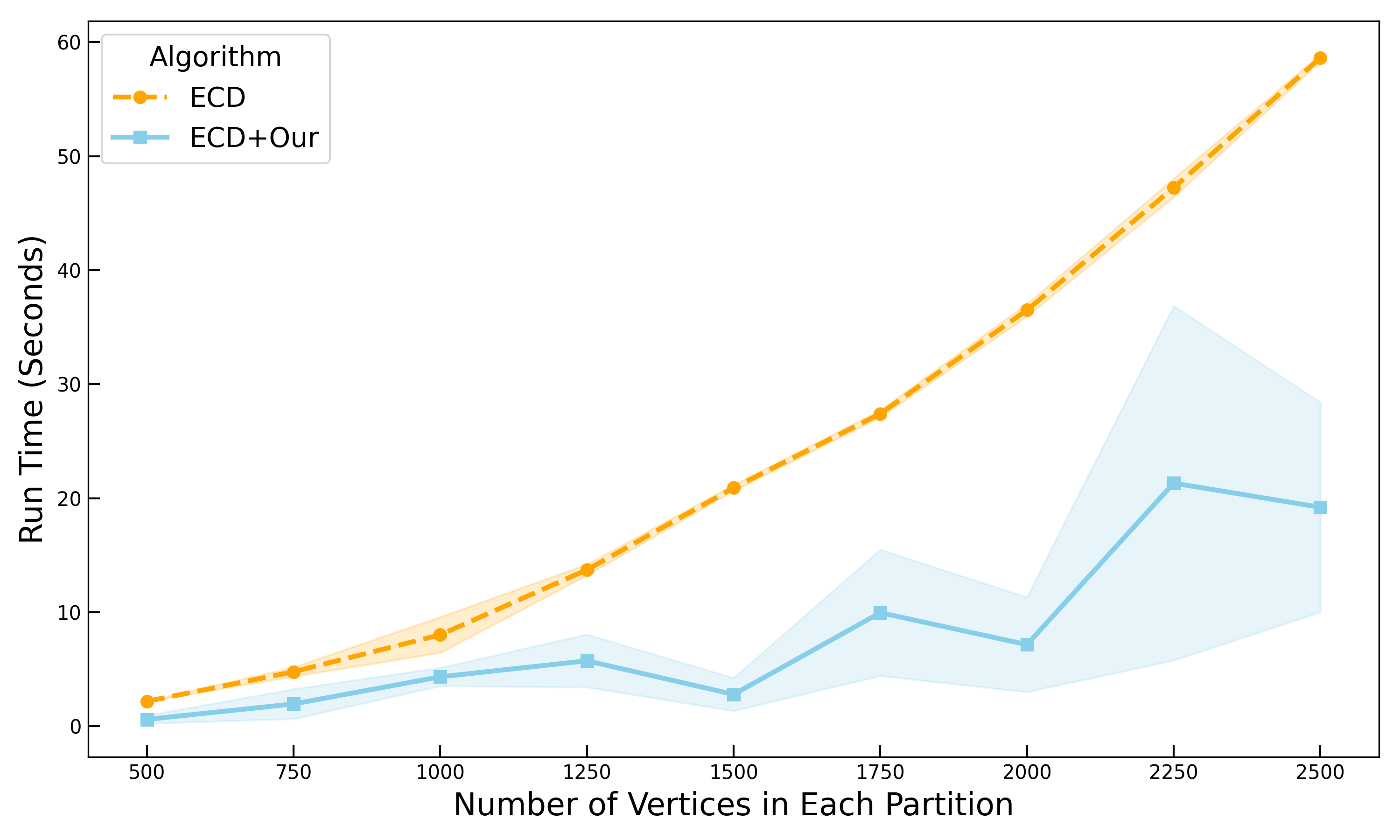}  
    \caption{$\eta = 0.9$}
    \label{SUBFIGURE LABEL 3}
\end{subfigure}
\caption{Runtime comparison between \textsf{ECD} and our algorithm for $\eta = 0.7,0.8$, and $0.9$ respectively.\label{fig:fig-5} }
\end{figure*}

\begin{table*}[t]
\centering
\caption{Comparison of \textsc{ECD} with our algorithm on real-world datasets. We use the vertices corresponding to the listed counties in the first column as the seed vertex of the local algorithm.}
	\begin{tabular}{cccccc}
		\hline
		\textsc{County Name}         & \textsc{Target} $\phi$ & \textsc{ECD}     & \textsc{ECD}        & \textsc{Our Algo.} & \textsc{Our Algo.}    \\  
		&               & \textsc{Runtime} & \textsc{Flow-ratio} & \textsc{Runtime} & \textsc{Flow-ratio} \\ \hline
		Maricopa County      
		& 0.2           & 20.661  & 0.414      & 13.434  & 0.417      \\ \hline
		Virginia Beach City  
		& 0.2           & 15.31   & 0.546      & 12.29   & 0.621      \\ \hline
		Kanawha county       
		& 0.2           & 9.318   & 0.33       & 8.483   & 0.33       \\ \hline
	\end{tabular}
  \label{tab:1}
\end{table*}	

\subsection{Results for Directed Graphs}		

\noindent \textbf{Synthetic Dataset.} We compare the performance of our algorithm with the     \textsc{EvoCutDirected} algorithm presented in \cite{MS21}, which we refer to as \textsc{ECD}, and use the graphs generated from the SBM as the algorithms' input. In our algorithm, given a digraph $G$ as input, we sparsify the graph along with generating the volume-biased ESP on $G’s$ semi-double cover $H$. Since the \textsc{ECD} is a local algorithm, we also test our algorithm locally. In this model, we look into a cluster which is almost bipartite with the bipartition being $L$ and $R$. We set the number of vertices in $L$ and $R$ to be $n_1$ and $n_2$ such that $n_1 = n_2$ and the probability of assigning an edge is defined by  
\[
\begin{blockarray}{ccc}
 & L & R \\
\begin{block}{c(cc)}
L & 9/n_1 & \eta \\
R & 1-\eta & 9/n_2 \\
\end{block}
\end{blockarray},
\]
i.e., the probability that there is an edge within the partition is $9/n_1 = 9/n_2$ and so on. As most of our directed edges are from $L$ to $R$, the value of $\eta$ is high. For our experiments we generate two sets of plots:
\begin{itemize}
\item We   fix the value of $\eta = 0.7$ and increase the number of vertices in each partition from $500$ to $2,500$, and compare the runtime of   \textsc{ECD}   and our algorithm.  As reported in Figure~\ref{fig:fig-4},  our algorithm takes much less time than the \textsc{ECD} algorithm and gives a similar flow-ratio at the same time as we increase the number of vertices.
    
    \item  We increase  the number of vertices in each partition from $500$ to $2,500$, increase  the value of $\eta$ from $0.7$ to $0.9$, and compare the runtime of the \textsc{ECD} algorithm and our algorithm.   As reported in Figure~\ref{fig:fig-5},   our algorithm runs faster than the \textsc{ECD} algorithm as $\eta$ increases.
\end{itemize}

\noindent \textbf{Real-world Dataset.}	Now we evaluate the algorithms’ performance on the US Migration Dataset~\cite{census2000}.   
We construct the digraph from the dataset as  follows: every county in the mainland USA is represented by a vertex; for any vertices $i,j$, the edge weight of is given by $|(M_{i,j}-M_{j,i})/(M_{i,j}+M_{j,i})|$, where $M_{i,j}$ is the number of people who migrated from county to county between 1995 and 2000; in addition, the direction of $(i,j)$ is set to be from $i$ to $j$ if $M_{i,j} > M_{j,i}$, otherwise the direction is set to be the opposite. 
We compare the output of \textsc{ECD} and the output of \textsc{ECD} when applying our sparsification algorithm as subroutine. 
 Furthermore, we use the vertices corresponding to different counties as the input of the local algorithm \textsc{ECD}.
As shown in Table~\ref{tab:1},    with our developed  algorithm the local \textsc{ECD} algorithm achieves roughly the same flow ratio, and our sparsification procedure significantly speeds up the total running time of the algorithm. Moreover, the runtime speedup is more significant when the local algorithm explores more parts of the input graph.	

In conclusion, these experimental studies demonstrate that our developed algorithms can be directly employed to speed up the running time of existing algorithms that find bipartite-like clusters, and can be widely applied when analysing datasets of various domains. We believe that our developed techniques and algorithms will motivate future  and fruitful studies for
analysing complex cluster structures of graphs.

\section*{Impact Statement}

This paper presents work whose goal is to advance the field of Machine Learning. There are many potential societal consequences of our work, none  of which we feel must be specifically highlighted here.

\section*{Acknowledgements}

The first and third authors of the paper are supported by  EPSRC Early Career Fellowship (EP/T00729X/1).

\bibliography{reference}
\bibliographystyle{icml2025}

\newpage
\appendix
\onecolumn

\section{Useful Inequalities}

The following inequalities will be used in our analysis.

	\begin{theorem}[Courant-Fischer Theorem]\label{thm:CF}
		Let $A$ be a $n \times n$ symmetric matrix with eigenvalues $\lambda_1 \le \lambda_2 \le \cdots \le \lambda_n$. Then, it holds for any $1\leq k\leq n$ that 
        \begin{align*}
            \lambda_k & = \min_{\substack{S \\ \dim(S) = k}} \max_{y \in S \setminus \{\mathbf{0}\}} \frac{y^\intercal \cdot A \cdot y}{y^\intercal \cdot y}\\
            &= \max_{\substack{S \\ \dim(S) = n-k+1}} \min_{y \in S\setminus \{\mathbf{0}\}}\frac{y^\intercal \cdot A \cdot y}{y^\intercal \cdot y},
        \end{align*}
		where the maximisation and minimisation are over the  subspaces of $\mathbb{R}^n$.
	\end{theorem}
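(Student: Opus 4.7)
The plan is to prove the Courant--Fischer characterization by invoking the spectral theorem and then executing a pair of matching upper and lower bounds on the min-max expression; the max-min formula will follow by the obvious symmetry (applying the min-max statement to $-A$, whose eigenvalues are $-\lambda_n \le \cdots \le -\lambda_1$, reindexes $\lambda_k$ as the $(n-k+1)$-th eigenvalue of $-A$).

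First I would fix an orthonormal eigenbasis $v_1, v_2, \ldots, v_n$ of $A$ with $A v_i = \lambda_i v_i$, which exists because $A$ is real symmetric. Any $y \in \mathbb{R}^n$ then decomposes as $y = \sum_{i=1}^n c_i v_i$ with $y^\intercal y = \sum c_i^2$ and $y^\intercal A y = \sum \lambda_i c_i^2$, so the Rayleigh quotient on a vector supported on indices in $I \subseteq [n]$ lies in the interval $[\min_{i \in I} \lambda_i, \max_{i \in I} \lambda_i]$. This is the single computational fact that drives everything else.

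For the min-max identity, I would establish the two inequalities separately. For the upper bound, take the particular subspace $S^* = \mathrm{span}(v_1, \ldots, v_k)$: every nonzero $y \in S^*$ satisfies $y^\intercal A y / y^\intercal y \le \lambda_k$ by the observation above, and $\dim S^* = k$, giving
\[
\min_{\dim(S)=k} \max_{y \in S \setminus \{\mathbf{0}\}} \frac{y^\intercal A y}{y^\intercal y} \le \lambda_k.
\]
For the matching lower bound, let $S$ be an arbitrary $k$-dimensional subspace and consider the complementary subspace $T = \mathrm{span}(v_k, v_{k+1}, \ldots, v_n)$, which has dimension $n-k+1$. Because $\dim S + \dim T = n+1 > n$, there exists a nonzero $y \in S \cap T$; since $y$ is supported on eigenvectors with indices $\ge k$, we obtain $y^\intercal A y / y^\intercal y \ge \lambda_k$, and hence $\max_{y \in S \setminus \{\mathbf{0}\}} y^\intercal A y / y^\intercal y \ge \lambda_k$. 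Taking the minimum over $S$ closes the argument.

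The max-min formula is then immediate either by running exactly the same dimension-counting argument with the roles of the two complementary subspaces swapped (use $S^* = \mathrm{span}(v_k, \ldots, v_n)$ for the lower bound and intersect an arbitrary $(n-k+1)$-dimensional $S$ with $\mathrm{span}(v_1, \ldots, v_k)$ for the upper bound), or by applying the min-max identity to $-A$ and relabelling. There is no real obstacle here; the only step requiring care is the dimension-counting argument that guarantees a nontrivial intersection of the test subspace with the appropriate eigenspan, and that amounts to nothing more than $\dim(S \cap T) \ge \dim S + \dim T - n$.
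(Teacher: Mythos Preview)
Your proof is the standard textbook argument and is correct. The paper, however, does not prove this statement at all: it lists the Courant--Fischer Theorem in an appendix of ``Useful Inequalities'' alongside Bernstein's inequality and the matrix Chernoff bound, all stated without proof as classical tools. So there is nothing to compare against; your write-up simply supplies a proof where the paper cites a known result.
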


	\begin{lem}[Bernstein's Inequality]\label{lem:Bern}
		Let $X_1,X_2,\cdots,X_n$ be independent random variables such that $|X_i| \le M$ for any $1 \le i \le n$. Let $X = \sum_{i=1}^n X_i$,  and   $R =  \sum_{i=1}^n \E\left[X_i^2\right]$. Then, it holds that $$\mathbf{P} \left[|X-\E[X]| \ge t\right] \le 2 \exp \left(- \frac{t^2}{2\left(R+\frac{Mt}{3}\right)}\right).$$
	\end{lem}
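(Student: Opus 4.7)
The plan is to prove Bernstein's inequality via the classical exponential-moment (Chernoff) method, following the textbook route. Without loss of generality I would first reduce to the centred case by replacing each $X_i$ with $Y_i \triangleq X_i - \E[X_i]$, so that $\E[Y_i] = 0$, $|Y_i| \le 2M$ (or, after a sharper analysis, effectively $M$ by a more careful bookkeeping), and $\E[Y_i^2] \le \E[X_i^2]$ so the sum is bounded by $R$. Setting $Y = \sum_i Y_i$, for any $\lambda > 0$ Markov's inequality on the random variable $e^{\lambda Y}$ gives
\[
\mathbf{P}[Y \ge t] \;\le\; e^{-\lambda t}\, \E\!\left[e^{\lambda Y}\right] \;=\; e^{-\lambda t} \prod_{i=1}^{n} \E\!\left[e^{\lambda Y_i}\right],
\]
where the last equality uses the independence of the $Y_i$'s.

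The main technical step is a per-coordinate moment-generating-function bound. I would use the elementary inequality $e^{x} \le 1 + x + \tfrac{x^2}{2}\sum_{k\ge 0}\tfrac{|x|^{k}}{3^{k}} = 1 + x + \tfrac{x^2/2}{1 - |x|/3}$ for $|x| < 3$, which follows from $k! \ge 2 \cdot 3^{k-2}$ for $k \ge 2$. Applied with $x = \lambda Y_i$ and using $|Y_i| \le M$, $\E[Y_i]=0$, and the inequality $1+u \le e^{u}$, this yields
\[
\E\!\left[e^{\lambda Y_i}\right] \;\le\; \exp\!\left(\frac{\lambda^2 \E[Y_i^2]/2}{1 - \lambda M/3}\right)
\]
for all $0 < \lambda < 3/M$. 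Multiplying over $i$ and using $\sum_i \E[Y_i^2] \le R$ collapses the product into $\exp\!\bigl(\tfrac{\lambda^2 R/2}{1-\lambda M/3}\bigr)$, leaving the one-sided tail bound
\[
\mathbf{P}[Y \ge t] \;\le\; \exp\!\left(-\lambda t + \frac{\lambda^2 R/2}{1 - \lambda M/3}\right).
\]

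To finish, I would optimise the exponent in $\lambda$ by choosing $\lambda = t/(R + Mt/3)$, a value designed so that the two terms in the exponent balance; substituting and simplifying gives an upper bound of $\exp\!\bigl(-\tfrac{t^2}{2(R+Mt/3)}\bigr)$. Applying the same argument to $-Y$ (whose summands satisfy the same hypotheses) and taking a union bound over the two one-sided events produces the claimed factor of $2$ and yields
\[
\mathbf{P}[\,|X-\E[X]| \ge t\,] \;\le\; 2\exp\!\left(-\frac{t^2}{2(R+Mt/3)}\right).
\]
The main obstacle, if any, is purely bookkeeping: getting the constant $1/3$ in the denominator correctly, which forces the particular choice of geometric series bound on $e^x$; everything else is routine Chernoff-style manipulation.
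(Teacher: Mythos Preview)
The paper does not prove this lemma: Bernstein's inequality is listed in the appendix under ``Useful Inequalities'' alongside the Courant--Fischer theorem and the matrix Chernoff bound, all stated without proof as standard tools. So there is no paper proof to compare against.

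Your sketch is the standard Chernoff-method derivation and is essentially correct. The one soft spot you already flag is the centring step: replacing $X_i$ by $Y_i = X_i - \E[X_i]$ gives only $|Y_i|\le 2M$ in general, which would produce $2M/3$ rather than $M/3$ in the denominator. For the lemma exactly as stated (general, not-necessarily-centred $X_i$ with $|X_i|\le M$ and $R=\sum_i \E[X_i^2]$) this constant discrepancy is genuine, but it is immaterial for the paper's single use of the lemma (the degree-preservation step), where each summand $Y_e$ lies in $[0,M]$ so that $|Y_e-\E[Y_e]|\le M$ anyway. Apart from that bookkeeping caveat, your argument --- Taylor-expand $e^{\lambda Y_i}$, use $k!\ge 2\cdot 3^{k-2}$ to sum the tail as a geometric series, multiply over $i$ by independence, and optimise $\lambda = t/(R+Mt/3)$ --- is exactly the textbook route and yields the stated bound.
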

	
	\begin{lem}[Matrix Chernoff Bound]\label{lem:Chernoff}
		Consider a finite sequence $\{X_i\}$ of independent, random, PSD matrices of dimension $d$ that satisfy $\|X_i\| \le R$. Let $\mu_{\min} = \lambda_{\min} \left(\E[\sum_i X_i]\right)$ 
		and $\mu_{\max} = \lambda_{\max} \left(\E[\sum_i X_i]\right)$. Then, it holds that
		$$
		\mathbf{P} \left[ \lambda_{\min} \left(\sum_i X_i\right) \le (1-\delta) \mu_{\min} \right] \le d \cdot \left(\frac{\mathrm{e}^{-\delta}}{(1-\delta)^{1-\delta}}\right)^{\frac{\mu_{\min}}{R}},
		$$
        for $\delta \in [0,1]$, and
		$$
		\mathbf{P} \left[ \lambda_{\max} \left(\sum_i X_i\right) \ge (1+\delta) \mu_{\max} \right] \le d \cdot \left(\frac{\mathrm{e}^{\delta}}{(1+\delta)^{1+\delta}}\right)^{\frac{\mu_{\max}}{R}} 
		$$
        for $\delta \ge 0$.
	\end{lem}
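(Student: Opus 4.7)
The plan is to prove the matrix Chernoff bound via the matrix Laplace transform method of Ahlswede--Winter and Tropp, which lifts the classical scalar Chernoff argument to the non-commutative setting. I will concentrate on the upper-tail bound; the lower-tail bound follows by the same machinery, exploiting the identity $\lambda_{\min}(M) = -\lambda_{\max}(-M)$ and parameterising over a negative exponent.

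First, I would establish the Laplace-transform Markov inequality: for any $\theta > 0$,
\[
\mathbf{P}\!\left[\lambda_{\max}\!\left(\textstyle\sum_i X_i\right) \ge t\right] \le e^{-\theta t} \cdot \mathbf{E}\!\left[\mathrm{tr}\,\exp\!\left(\theta\textstyle\sum_i X_i\right)\right],
\]
using $e^{\lambda_{\max}(M)} = \lambda_{\max}(e^M) \le \mathrm{tr}(e^M)$ for symmetric $M$ together with scalar Markov. Next I would invoke the non-commutative subadditivity of cumulants, which is the consequence of Lieb's concavity theorem that
\[
\mathbf{E}\!\left[\mathrm{tr}\,\exp\!\left(\textstyle\sum_i \theta X_i\right)\right] \le \mathrm{tr}\,\exp\!\left(\textstyle\sum_i \log \mathbf{E}\!\left[e^{\theta X_i}\right]\right).
\]
The third step is to control each summand's moment generating function: since $X_i \succeq 0$ has spectrum in $[0,R]$, the scalar bound $e^{\theta x} \le 1 + \frac{e^{\theta R}-1}{R}x$ on $[0,R]$ lifts via spectral functional calculus to $\mathbf{E}[e^{\theta X_i}] \preceq I + \frac{e^{\theta R}-1}{R}\mathbf{E}[X_i]$, and then operator monotonicity of $\log$ combined with $\log(I+A) \preceq A$ gives $\log \mathbf{E}[e^{\theta X_i}] \preceq \frac{e^{\theta R}-1}{R}\mathbf{E}[X_i]$.

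Assembling these three ingredients with the elementary bound $\mathrm{tr}(e^M) \le d \cdot e^{\lambda_{\max}(M)}$ yields
\[
\mathbf{P}\!\left[\lambda_{\max}\!\left(\textstyle\sum_i X_i\right) \ge (1+\delta)\mu_{\max}\right] \le d \cdot \exp\!\left(-\theta(1+\delta)\mu_{\max} + \tfrac{e^{\theta R}-1}{R}\mu_{\max}\right),
\]
and the optimal choice $\theta = R^{-1}\log(1+\delta)$ recovers the stated form $d\bigl(e^{\delta}/(1+\delta)^{1+\delta}\bigr)^{\mu_{\max}/R}$. The lower tail proceeds analogously, using the scalar estimate $e^{-\theta x} \le 1 + \frac{e^{-\theta R}-1}{R}x$ on $[0,R]$ and the observation that multiplication by the negative scalar $\frac{e^{-\theta R}-1}{R}$ swaps $\lambda_{\max}$ with $\lambda_{\min}$, producing $\mu_{\min}$ in the exponent in place of $\mu_{\max}$; optimising at $\theta = -R^{-1}\log(1-\delta)$ then yields the second claimed bound.

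The principal obstacle is the subadditivity step, since the $X_i$ generally do not commute and the naive factorisation $\exp(\sum_i \theta X_i) = \prod_i \exp(\theta X_i)$---the engine of the scalar Chernoff proof---fails. Lieb's concavity theorem, asserting that $A \mapsto \mathrm{tr}\exp(H + \log A)$ is concave on the positive-definite cone, is the deep non-commutative substitute that permits a Jensen-type inequality under the trace and allows the expectation to be pulled inside the exponential sum. Every remaining step is a routine scalar-to-matrix transfer via the spectral calculus together with an elementary one-dimensional optimisation in $\theta$, so the entire argument really hinges on this one non-commutative Jensen inequality.
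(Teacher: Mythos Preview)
Your proposal is a correct and standard outline of Tropp's proof of the matrix Chernoff bound via Lieb's concavity theorem. The paper, however, does not supply its own proof of this lemma: it is listed under ``Useful Inequalities'' as a known tool and invoked without argument, so there is no approach in the paper to compare against.
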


\section{Omitted Detail from Section~\ref{sec:undirected}\label{sec:proof1}}
This section presents all the omitted detail from Section~\ref{sec:undirected}, and gives a complete proof of Theorem~\ref{thm:undirected}.
 We first recall that, for every vertex $u$ and its adjacent vertex $v$, the algorithm assigns  the edge $e=\{u, v\}$ the probability 
	\begin{equation} 
		p_u(v) \triangleq \min \left\{w_G(u,v) \cdot \frac{C \cdot  \log^3 n}{d_G(u) \cdot (2 - \lambda_{n-k})},1 \right\},
	\end{equation}
	for a large enough constant $C \in \mathbb{R}_{\ge 0}$. The algorithm checks every edge and samples an edge $e = \{u,v\}$ with probability $p_e$, where $$p_e \triangleq p_u(v)+p_v(u)-p_u(v) \cdot p_v(u).$$ Note that, it is easy to check that $p_e$ satisfies the inequality $$\frac{1}{2} (p_u(v)+p_v(u)) \le p_e \le p_u(v)+p_v(u).$$ We start with an empty set $F$ and gradually store all the sampled edges in $F$, which is sampled by the algorithm. Finally, the algorithm returns a weighted graph $H = (V,F,w_H)$, where the weight $w_H(u,v)$ of every sampled edge $e = \{u,v\} \in F$ is defined by $$w_H(u,v) = \frac{w_G(u,v)}{p_e}.$$ 
	
	Next, we analyze the size of $F$. Since $$
	\sum_{u  } \sum_{e = \{u,v\}} w_G(u,v) \cdot \frac{C \cdot  \log^3 n}{d_G(u) \cdot (2 - \lambda_{n-k})} = O \left( \frac{n\cdot   \log^3 n}{2 - \lambda_{n-k}}\right),
	$$  it holds by Markov inequality that with constant probability the number of edges $e = \{u,v\}$ with $p_u(v) \ge 1$ is $O \left( \frac{n \cdot  \log^3 n}{2 - \lambda_{n-k}}\right)$.  Without loss of generality, we assume that these edges are in $F$, and in the remaining part of the proof we assume it holds for any edge $u \sim v$ that $$ w_G(u,v) \cdot \frac{C \cdot \log^3 n}{d_G(u) \cdot (2 - \lambda_{n-k})} < 1.$$ Then, the expected number of edges in $H$ equals 
 \begin{align*}
	\sum_{e = \{u,v\}} p_e & \le \sum_{e = \{u,v\}} p_u(v)+p_v(u)\\
    & = \frac{C \cdot  \log^3 n}{(2 - \lambda_{n-k})}  \sum_{e = \{u,v\}} w(u,v) \cdot \left(\frac{1}{d_G(u)} + \frac{1}{d_G(v)}\right) \\
    & = O \left( \frac{n \cdot  \log^3 n}{2 - \lambda_{n-k}}\right),
	\end{align*} and by Markov inequality it holds with constant probability that $$|F| = O \left( \frac{n\cdot  \log^3 n}{2 - \lambda_{n-k}}\right).$$
	
	Now we   show that the cut value between $A_i$ and $B_i$ is preserved in $H$ for all $1 \le i \le k$.  For any edge $e = \{u,v\}$, we define the  random variable $Y_e$ by 
	\begin{equation}\label{eqn:Ye}
		Y_e = \begin{cases}
			\dfrac{w_G(u,v)}{p_e} & \text{with probability } p_e, \\
			0 & \text{otherwise.}
		\end{cases}
	\end{equation}
	Also, we define $X = w_H(A_i,B_i)$, and have that  
	\begin{equation}
    \begin{split}
        \E[X] & = \sum_{\substack{e = \{u,v\} \\ u \in A_i, v \in B_i}} \E\left[Y_e\right] = \sum_{\substack{e = \{u,v\} \\ u \in A_i, v \in B_i}} p_e \cdot \dfrac{w_G(u,v)}{p_e} \\
        & = \sum_{\substack{e = \{u,v\} \\ u \in A_i, v \in B_i}} w_G(u,v) = w_G(A_i,B_i).
    \end{split}
	\end{equation}
	Next, we analyse the second moment of the random variable $X$ and have that 
	\begin{align}
		\begin{split}
			\E\left[X^2\right] &= \sum_{\substack{e = \{u,v\} \\ u \in A_i, v \in B_i}} p_e \cdot \left(\dfrac{w_G(u,v)}{p_e}\right)^2\\
            &= \sum_{\substack{e = \{u,v\} \\ u \in A_i, v \in B_i}} \dfrac{w_G(u,v)^2}{p_e}\\
			&\le \sum_{\substack{e = \{u,v\} \\ u \in A_i, v \in B_i}} \dfrac{2w_G(u,v)^2}{p_u(v)+p_v(u)}\\
			&= \sum_{\substack{e = \{u,v\} \\ u \in A_i, v \in B_i}} \dfrac{2w_G(u,v)^2}{ \frac{w_G(u,v) \cdot C \cdot  \log^3 n}{(2 - \lambda_{n-k})} \cdot \left(\frac{1}{d_G(u)} + \frac{1}{d_G(v)}\right)}\\
			&\le \frac{2- \lambda_{n-k}}{C \cdot  \log^3 n} \ds \sum_{\substack{e = \{u,v\} \\ u \in A_i, v \in B_i}} w(u,v) \cdot \left(\frac{d_G(u)+d_G(v)}{2}\right),
		\end{split}
	\end{align}
	where the last step follows by the means inequality.
	Let $\{(A_i,B_i)\}_{i=1}^k$ be the optimal cluster where $\bar{\rho}(k)$ is attained for graph $G$. Recall that for every $k \in \mathbb{N}$, the $k$-way dual Cheeger constant is defined by 
	\begin{equation*}
		\bar{\rho}_G(k) = \max_{(A_1,B_1),\cdots,(A_k,B_k)} \min_{1 \le i \le k} \overline{\phi}_G (A_i,B_i).
	\end{equation*}
	Then, we have  for every $1 \le i \le k$ that  $$\bar{\rho}_G(k) \le \overline{\phi}_G (A_i,B_i) = \frac{2 w_G(A_i,B_i)}{\vol_G(A_i \cup B_i)},$$ which implies
	\begin{equation}
		\frac{\bar{\rho}_G(k)}{2} \cdot \vol_G(A_i \cup B_i) \le \sum_{\substack{e = \{u,v\} \\ u \in A_i, v \in B_i}} w_G(u,v).
	\end{equation}
	Next, by the  Chebyshev's inequality we have for any constant $c\in\mathbb{R}^+$ that 
	\begin{align}\label{eqn:Cheb}
		\begin{split}
		\lefteqn{\textbf{P}\left[|X-\E[X]| \ge c \cdot \E[X]\right]}\\
   &\le \frac{\E[X^2]}{c^2 \cdot \E[X]^2}\\
			&\le \frac{\frac{2- \lambda_{n-k}}{C \cdot  \log^3 n} \left( \sum_{\substack{e = \{u,v\} \\ u \in A_i, v \in B_i}} w_G(u,v) \cdot \left(\frac{d_G(u)+d_G(v)}{2}\right) \right)}{c^2 \cdot \left( \sum_{\substack{e = \{u,v\} \\ u \in A_i, v \in B_i}} w_G(u,v)\right)^2}   \\
			&\le \frac{\frac{2- \lambda_{n-k}}{C \cdot  \log^3 n} \left( \sum_{\substack{e = \{u,v\} \\ u \in A_i, v \in B_i}} w_G(u,v) \cdot \left(\frac{d_G(u)+d_G(v)}{2}\right) \right)}{c^2 \cdot \left( \frac{\bar{\rho}_G(k)}{2} \cdot \vol_G(A_i \cup B_i)\right)^2}\\
			&= \frac{ 2\cdot (2- \lambda_{n-k})}{c^2 \cdot C\cdot   \log^3 n \cdot \bar{\rho}_G(k)^2} \cdot \frac{\sum_{\substack{e = \{u,v\} \\ u \in A_i, v \in B_i}} w_G(u,v) \cdot \left(d_G(u)+d_G(v)\right)}{\vol_G(A_i \cup B_i)^2}\\
			&\le \frac{2\cdot (2- \lambda_{n-k})}{c^2 \cdot C\cdot  \log^3 n \cdot \bar{\rho}_G(k)^2} \cdot \left(\max_{\substack{e = \{u,v\} \\ u \in A_i, v \in B_i}} \{d_G(u)+d_G(v)\}\right) \cdot \frac{ \sum_{\substack{e = \{u,v\} \\ u \in A_i, v \in B_i}} w_G(u,v)}{\vol_G(A_i \cup B_i)^2}.
		\end{split}
	\end{align}
	Since $\vol_G(A_i \cup B_i) = \sum_{u \in A_i} d_G(u) + \sum_{v \in B_i} d_G(v)$ and $d_G(u) = \sum_{u \sim v} w_G(u,v)$, we have
    \begin{align*}
        \max_{\substack{e = \{u,v\} \\ u \in A_i, v \in B_i}} \left\{d_G(u)+d_G(v)\right\} & \le \sum_{u \in A_i} d_G(u) + \sum_{v \in B_i} d_G(v)\\
        &= \vol_G(A_i \cup B_i)
    \end{align*}
    and $$
	\sum_{\substack{e = \{u,v\} \\ u \in A_i, v \in B_i}} w_G(u,v) \le \vol_G(A_i \cup B_i).
	$$
	Thus, we have  by  \eqref{eqn:Cheb} and the assumption of $\bar{\rho}(k)\geq \frac{1}{\log(n)}$  that
    \begin{align*}
        \textbf{P}\left[|X-\E[X]| \ge c \cdot \E[X]\right] &\le \frac{2(2- \lambda_{n-k})}{c^2 \cdot C \cdot  \log^3 n \cdot \bar{\rho}(k)^2}\\ &= O\left(\frac{1}{ \log n}\right).
    \end{align*}
	Hence, by choosing a sufficient large constant $c$ and the union bound, we have that 	\begin{equation}\label{eqn:cutH}
		w_H(A_i,B_i) = \Omega\left(w_G(A_i,B_i)\right) \text{ for all } 1\le i \le k.
	\end{equation}

	Next, we show that the degree of every vertex  in $H$ is approximately preserved with high probability. Based on the random variable $Y_e$ defined in \eqref{eqn:Ye}, we define the random variable $Z_u$ by $$Z_u = \sum_{e:v\sim u} Y_e.$$ Then, the expected value of $Z_u$ is given by 
    \begin{align*}
        \E[Z_u] &= \sum_{e:v\sim u} \E[Y_e] = \sum_{e:v\sim u} p_e \cdot \frac{w_G(u,v)}{p_e}\\
        &= \sum_{v :v\sim u} w_G(u,v) = d_G(u),
    \end{align*}
	and the second moment can be upper bounded by
    \begin{align*}
        \sum_{e:v\sim u} \E\left[Y_e^2\right] &=  \sum_{e:v\sim u} p_e \cdot \left(\frac{w_G(u,v)}{p_e}\right)^2\\
        & = \sum_{e:v\sim u} \frac{w_G(u,v)^2}{p_e} \le \sum_{v:v\sim u} \frac{w_G(u,v)^2}{p_u(v)},
    \end{align*}
	since $p_e \ge p_u(v).$ Now using the value of $p_u(v)$ from \eqref{eqn:pu}, we have
	\begin{align*}
	\sum_{e:v\sim u} \E\left[Y_e^2\right] & \le  \sum_{v:v\sim u} w(u,v)^2 \cdot \frac{d_G(u) \cdot (2 - \lambda_{n-k})}{w(u,v) \cdot C \cdot  \log^3 n}\\
    & = \frac{d_G(u) \cdot (2 - \lambda_{n-k})}{C \cdot \log^3 n} \sum_{ v:v\sim u} w_G(u,v)\\
    & = \frac{d_G^2(u) \cdot (2 - \lambda_{n-k})}{C \cdot  \log^3 n}
	\end{align*}
	and for any edge $e = \{u,v\}$ we have that
	$$
	0 \le \frac{w(u,v)}{p_e} \le \frac{w(u,v)}{p_u(v)} \le \frac{d_G(u) \cdot (2 - \lambda_{n-k})}{C \cdot  \log^3 n}.
	$$
	Now, applying Bernstein's inequality (Lemma~\ref{lem:Bern}), we have  
	\begin{align*}
		& \textbf{P} \left[|d_H(u)-d_G(u)| \ge \frac{d_u}{2}\right] \\
        &= \textbf{P} \left[|Z_u-E[Z_u]| \ge \frac{\E[Z_u]}{2}\right]\\
		&\le 2 \cdot \exp \left(\frac{-\frac{1}{8} \cdot d_G^2(u)}{\frac{d_G^2(u) \cdot (2 - \lambda_{n-k})}{C \cdot  \log^3 n} + \frac{1}{6} \cdot \frac{d_G^2(u) \cdot (2 - \lambda_{n-k})}{C \cdot  \log^3 n}} \right)\\
		&= 2 \cdot \exp \left(- \frac{\frac{1}{8} \cdot C \cdot  \log^3 n}{\frac{7}{6} \cdot (2 - \lambda_{n-k})}\right)\\
		&=   o \left( \frac{1}{n^2}\right).
	\end{align*} 
	Hence, it holds by the union bound that, with high probability, the degree of all the vertices in $H$ are approximately preserved up to a constant factor. This implies that for any subset $S \subseteq V$, we have
	\begin{equation*}
		\vol_H(S) = \Theta \left( \vol_G (S)\right),
	\end{equation*}
	more specifically,
	\begin{equation}\label{eqn:volH}
		\vol_H(A_i \cup B_i) = \Theta \left( \vol_G (A_i \cup B_i)\right),
	\end{equation}
	for all $1\le i \le k.$	Thus, combining  \eqref{eqn:cutH} and \eqref{eqn:volH} gives us that 
	\begin{equation}
		\overline{\phi}_H (A_i,B_i) = \Omega \left( \overline{\phi}_G (A_i,B_i)\right)
	\end{equation}
	for all $1\le i \le k$, which implies that
    \begin{align*}
        \bar{\rho}_H(k) \ge \min_{1 \le i \le k} \overline{\phi}_H (A_i,B_i) & = \min_{1 \le i \le k} \Omega \left(  \overline{\phi}_G (A_i,B_i)\right)\\
        &= \Omega \left(  \bar{\rho}_G(k)\right),
    \end{align*}
	where the last equality follows from the fact that $\{(A_i,B_i)\}_{i=1}^k$ is the optimal cluster where $\bar{\rho}(k)$ is attained for graph $G$.
	
	Next, we show that the top $(n-k)$-eigenspaces of  $\mathcal{J}_G$ are preserved in $H$. Without loss of generality we assume the graph is connected.  Since $\mathcal{J}_G = 2I -\mathcal{L}_G$ by definition, it holds that
	\begin{equation}\label{eq:eigen_relation}
		\lambda_{i}(\mathcal{J}_G) = 2 - \lambda_{n+1-i}(\mathcal{L}_G).
	\end{equation}
	Let
	$$
	\mathcal{P} \triangleq \sum_{i=1}^{n-k} (2-\lambda_i(\mathcal{L}_G)) f_i f_i^\intercal,
	$$
	and with slight abuse of notation we call $\mathcal{P}^{-1/2}$ as the square root of the pseudo-inverse of $\mathcal{P}$, i.e., 
	$$
	\mathcal{P}^{-1/2} = \sum_{i=1}^{n-k} (2-\lambda_i(\mathcal{L}_G) )^{-1/2} f_i f_i^\intercal.
	$$
	Let $\overline{\mathcal{P}}$ be the projection on the span of $\left\{f_1,f_2,\cdots,f_{n-k}\right\}$, then
	$$
	\overline{\mathcal{P}} =  \sum_{i=1}^{n-k} f_i f_i^\intercal.
	$$
	Recall that, for each vertex $v$, the indicator vector $\chi_v \in \mathbb{R}^n$ is defined by $\chi_v(u) = \frac{1}{\sqrt{d_G(v)}}$ if $u = v$ and $\chi_v(u) = 0$ otherwise. For each edge $e=\{u,v\}$ of $G$ we define a vector $g_e = \chi_u+\chi_v \in \mathbb{R}^n$ and a random matrix $X_e \in \mathbb{R}^{n \times n}$ by
	\begin{equation}
		X_e = \begin{cases}
			w_H(u,v) \cdot \mathcal{P}^{-1/2} g_e g_e^\intercal \mathcal{P}^{-1/2} & \text{if } e=\{u,v\} \text{ is sampled}\\
            & \text{by the algorithm,}\\
			\textbf{0} & \text{otherwise.}
		\end{cases}
	\end{equation}
	Then, it holds that 
	\begin{align*}
		\sum_{e \in E} X_e &= \sum_{e=\{u,v\} \in F} w_H(u,v) \cdot \mathcal{P}^{-1/2} g_e g_e^\intercal \mathcal{P}^{-1/2}\\
		&=  \mathcal{P}^{-1/2} \left(\sum_{e=\{u,v\} \in F}  w_H(u,v) \cdot g_e g_e^\intercal \right)  \mathcal{P}^{-1/2} \\
		&= \mathcal{P}^{-1/2} \mathcal{J}_H' \mathcal{P}^{-1/2},
	\end{align*}
	where $$ \mathcal{J}_H' \triangleq \sum_{e=\{u,v\} \in F}  w_H(u,v) \cdot g_e g_e^\intercal$$ is   the signless Laplacian matrix of $H$ normalised with respect to the degree of the vertices in the original graph $G$. We will now prove that, with high probability the top   $n-k$ eigenspaces of $\mathcal{J}_H'$ and $\mathcal{J}_G$ are approximately the same. We first analyse the   expectation of $\sum_{e \in E} X_e$, and have that 
	\begin{align*}
		\E\left[\sum_{e \in E} X_e\right] &= \sum_{e=\{u,v\} \in E} p_e \cdot w_H(u,v) \cdot \mathcal{P}^{-1/2} g_e g_e^\intercal \mathcal{P}^{-1/2}\\
		&=  \sum_{e=\{u,v\} \in E} p_e \cdot \frac{w_G(u,v)}{p_e} \cdot \mathcal{P}^{-1/2} g_e g_e^\intercal \mathcal{P}^{-1/2}\\
		&= \mathcal{P}^{-1/2} \left(\sum_{e=\{u,v\} \in F}  w_G(u,v) \cdot g_e g_e^\intercal\right) \mathcal{P}^{-1/2}\\
		&= \mathcal{P}^{-1/2}  \mathcal{J}_G \mathcal{P}^{-1/2} = \sum_{i=1}^{n-k} f_i f_i^\intercal = \overline{\mathcal{P}}.
	\end{align*}
	Moreover, for any edge $e=\{u,v\} \in E$   sampled by the algorithm, we have
	\begin{align*}
		\|X_e\| &\le w_H(u,v) \cdot g_e^\intercal \mathcal{P}^{-1/2 }   \mathcal{P}^{-1/2} g_e =  \frac{w_G(u,v)}{p_e} \cdot  g_e^\intercal \mathcal{P}^{-1} g_e\\
		&\le \frac{w_G(u,v)}{p_e} \cdot \frac{1}{2-\lambda_{n-k}}\cdot  \|g_e\|^2\\
		&\le \frac{2w_G(u,v)}{p_u(v)+p_v(u)} \cdot \frac{1}{2-\lambda_{n-k}} \cdot \left(\frac{1}{d_G(u)} + \frac{1}{d_G(v)}\right) \\
		&\le \frac{2}{C \cdot  \log^3 n},
	\end{align*}
	where the second inequality follows by the min-max theorem of eigenvalues. Now we apply the matrix Chernoff bound (Lemma~\ref{lem:Chernoff}) to analyze the eigenvalues of $\sum_{e \in E} X_e$. Following Lemma~\ref{lem:Chernoff} we set the parameters as follows:
	\begin{align}
		\begin{split}
			\mu_{\max} &= \lambda_{\max} \left(\E\left[\sum_{e \in E} X_e\right]\right) = \lambda_{\max} \left(\overline{\mathcal{P}}\right) = 1,\\
			R &= \frac{2}{C \cdot  \log^3 n}, \text{ and }\\
			\delta &= \frac{1}{2}.
		\end{split}
	\end{align}
	Then using the Matrix Chernoff bound (Lemma~\ref{lem:Chernoff}), we have
	$$
	\textbf{P} \left[ \lambda_{\max} \left(\sum_{e \in E} X_e\right) \ge \frac{3}{2} \right] \le n \cdot \left(\frac{\mathrm{e}^{\frac{1}{2}}}{1.5^{\frac{3}{2}}}\right)^{\frac{C \cdot  \log^3 n}{2}} = O\left(\frac{1}{ n^3}\right),
	$$
	for some constant $C$; this implies that \begin{equation}\label{eqn:lambdamax}
		\textbf{P} \left[ \lambda_{\max} \left(\sum_{e \in E} X_e\right) \le \frac{3}{2} \right] = 1 - O\left(\frac{1}{n^3}\right).
	\end{equation}
	On the other hand, since $\E\left[\sum_{e \in E} X_e\right] = \overline{\mathcal{P}}$, we have $\mu_{\min} = 1$ and hence keeping $R$ and $\delta$ the same as above, using the Matrix Chernoff bound (Lemma~\ref{lem:Chernoff}), we get
	$$
	\textbf{P} \left[ \lambda_{\min} \left(\sum_{e \in E} X_e\right) \le \frac{1}{2} \right] \le n \cdot \left(\frac{\mathrm{e}^{-\frac{1}{2}}}{0.5^{\frac{1}{2}}}\right)^{\frac{C \cdot  \log^3 n}{2}} = O\left( \frac{1}{n^3}\right);
	$$
	this implies that 
	\begin{equation}\label{eqn:lambdamin}
		\textbf{P} \left[ \lambda_{\min} \left(\sum_{e \in E} X_e\right) \geq \frac{1}{2} \right] = 1 - O\left( \frac{1}{n^3}\right).
	\end{equation}
	Combining \eqref{eqn:lambdamax}, \eqref{eqn:lambdamin}   and the fact that $\sum_{e \in E} X_e = \mathcal{P}^{-1/2} \mathcal{J}_H' \mathcal{P}^{-1/2}$, with probability $1 -  O\left(\frac{1}{n^3}\right)$ it holds for any non-zero $x \in \mathbb{R}^n$ in span$\{f_1,f_2,\cdots,f_{n-k}\}$ that
	\begin{equation}\label{eqn:1/2,3/2}
		\frac{x^\intercal \mathcal{P}^{-1/2}  \mathcal{J}_H' \mathcal{P}^{-1/2}   x}{x^\intercal  x} \in \left[\frac{1}{2},\frac{3}{2}\right].
	\end{equation}
	Let $y = \mathcal{P}^{-1/2}   x$, and  we   rewrite  \eqref{eqn:1/2,3/2} as
	$$
	\frac{y^\intercal  \mathcal{J}_H' y}{y^\intercal \mathcal{P} y} = \frac{y^\intercal  \mathcal{J}_H' y}{y^\intercal y} \cdot \frac{y^\intercal y}{y^\intercal \mathcal{P} y} \in \left[\frac{1}{2},\frac{3}{2}\right].
	$$
	Since $\dim(\textup{span} \{f_{1},f_2,\cdots,f_{n-k}\}) = n-k$, there exist $n-k$ orthogonal vectors   whose Rayleigh quotient with respect to $\mathcal{J}_H'$ is $\Theta(\lambda_{n-k}(2I-\mathcal{L}_G))$. Hence, by the Courant-Fischer Theorem (Theorem~\ref{thm:CF})  we have	  \begin{equation}\label{eqn:LH'-LG}
		\frac{1}{2} \cdot \lambda_{n-k}(2I-\mathcal{L}_G) \le   \lambda_{k+1}(\mathcal{J}_H') \le \frac{3}{2} \cdot \lambda_{n-k}(2I-\mathcal{L}_G)
	\end{equation}
	By the definition of  $\mathcal{J}_H' = D_G^{-1/2} \left(D_H + A_H\right)D_G^{-1/2}$, we have  
    \begin{align*}
    \mathcal{J}_H &= D_H^{-1/2} \left(D_H + A_H\right)D_H^{-1/2}\\
    &= D_H^{-1/2} \left(D_G^{1/2} \cdot  \mathcal{J}_H' \cdot D_G^{1/2}\right)D_H^{-1/2}.
    \end{align*}
	Hence, we set $y = D_G^{1/2}   D_H^{-1/2}  x$ for any $x \in \mathbb{R}^n$ and have that 
	\begin{equation}\label{eqn:xy}
    \begin{split}
        \frac{x^\intercal     \mathcal{J}_H   x}{x^\intercal \cdot  x} & = \frac{x^\intercal \cdot D_H^{-1/2} \left(D_G^{1/2} \cdot  \mathcal{J}_H'  \cdot D_G^{1/2}\right)D_H^{-1/2} \cdot x}{x^\intercal \cdot x}\\
        & = \frac{y^\intercal \cdot  \mathcal{J}_H' \cdot y}{x^\intercal \cdot x} \ge \frac{1}{2}\cdot \frac{y^\intercal \cdot  {\mathcal{J}_H'} \cdot y}{y^\intercal \cdot y},
    \end{split}
	\end{equation}
	where we use the fact that the  degree of a vertex differs by a constant factor between $H$ and $G$. Similarly, we  also have
	\begin{equation}\label{eqn:xy-1}
		\frac{x^\intercal \cdot \mathcal{J}_H \cdot x}{x^\intercal \cdot x} \le \frac{3}{2} \cdot \frac{y^\intercal \cdot  \mathcal{J}_H' \cdot y}{y^\intercal \cdot y},
	\end{equation}
	
	Let $T \subset \mathbb{R}^n$ be a $(k+1)$-dimensional subspace of $\mathbb{R}^n$ satisfying $$
	\lambda_{k+1}( \mathcal{J}_H) = \max_{x \ne 0, x \in T} \frac{x^\intercal \cdot  \mathcal{J}_H\cdot x}{x^\intercal \cdot x},
	$$ and   $\widetilde{T} = \left\{D_G^{1/2}   D_H^{-1/2}   x : x \in T\right\}$. Since   $D_G^{1/2}   D_H^{-1/2}$ has full rank, $\widetilde{T}$ is also a $(k+1)$-dimensional subspace of $\mathbb{R}^n$. Hence, by  the Courant-Fischer Theorem~(Theorem~\ref{thm:CF}) and  \eqref{eqn:xy}, we have that   
	\begin{align}\label{eqn:LH'-LH}
		\begin{split}
			\lambda_{k+1}(\mathcal{J}_H') &= \min_{\substack{S \\ \dim(S) = k+1}} \max_{y \in S \setminus \{\mathbf{0}\}} \frac{y^\intercal \cdot  \mathcal{J}_H' \cdot y}{y^\intercal \cdot y}\\
			&\le \max_{y \in \widetilde{T} \setminus \{\mathbf{0}\}} \frac{y^\intercal \cdot  \mathcal{J}_H' \cdot y}{y^\intercal \cdot y} \\
			&\le 2 \cdot \max_{x \in T \setminus \{\mathbf{0}\}} \frac{x^\intercal \cdot \mathcal{J}_H \cdot x}{x^\intercal \cdot x}  = 2 \cdot \lambda_{k+1}(\mathcal{J}_H).
		\end{split}
	\end{align}
	Next, using  \eqref{eqn:LH'-LG} and \eqref{eqn:LH'-LH}, we have
	$$
	\frac{1}{2}\cdot \lambda_{k+1}(\mathcal{J}_G) \le \lambda_{k+1}(\mathcal{J}_H') \le 2 \cdot \lambda_{k+1}(\mathcal{J}_H),
	$$ which implies that 
	\begin{equation}\label{eqn:LG-LH-1}
		\frac{1}{4}\cdot  \lambda_{k+1}(\mathcal{J}_G) \le \lambda_{k+1}( \mathcal{J}_H).
	\end{equation}
	
	Similarly, let $U \subset \mathbb{R}^n$ be an $(n-k)$-dimensional subspace of $\mathbb{R}^n$ satisfying $$
	\lambda_{k+1}( \mathcal{J}_H)  = \min_{x \ne 0, x \in U} \frac{x^\intercal \cdot \mathcal{J}_H \cdot x}{x^\intercal \cdot x},
	$$   and  $\widetilde{U} = \left\{D_G^{1/2}  D_H^{-1/2}   x : x \in U\right\}$. Since   $D_G^{1/2} \cdot D_H^{-1/2}$ has full rank, $\widetilde{U}$ is also an $(n-k)$-dimensional subspace of $\mathbb{R}^n$. Thus, using the Courant-Fischer Theorem (Theorem~\ref{thm:CF}) and  \eqref{eqn:xy-1}, we have  
	\begin{align}\label{eqn:LH'-LH-1}
		\begin{split}
			\lambda_{k+1}( \mathcal{J}_H') &= \max_{\substack{S \\ \dim(S) = n-k}} \min_{y \in S \setminus \{\mathbf{0}\}}\frac{y^\intercal \cdot \mathcal{J}_H'  \cdot y}{y^\intercal \cdot y}\\
			&\ge \min_{y \in \widetilde{U} \setminus \{\mathbf{0}\}} \frac{y^\intercal \cdot  \mathcal{J}_H' \cdot y}{y^\intercal \cdot y} \\
			&\ge \frac{2}{3} \cdot \min_{x \in U \setminus \{\mathbf{0}\}} \frac{x^\intercal \cdot \left(2I - \mathcal{L}_H\right) \cdot x}{x^\intercal \cdot x}\\
			& = \frac{2}{3} \cdot \lambda_{k+1} \left( \mathcal{J}_H\right).
		\end{split}
	\end{align} 
	Next, by \eqref{eqn:LH'-LG} and \eqref{eqn:LH'-LH-1} we have 
	$$
	\frac{2}{3} \cdot \lambda_{k+1}( \mathcal{J}_H) \le \gamma_{k+1}(\mathcal{L}_H') \le \frac{3}{2} \cdot \lambda_{k+1}(\mathcal{J}_G),
	$$ which implies that 
	\begin{equation}\label{eqn:LG-LH-2}
		\lambda_{k+1}( \mathcal{J}_H) \le \frac{9}{4} \cdot \lambda_{k+1}(\mathcal{J}_G).
	\end{equation}
	Thus, combining \eqref{eqn:LG-LH-1} and \eqref{eqn:LG-LH-2} we have
	$$
	\frac{1}{4} \cdot \lambda_{k+1}(\mathcal{J}_G) \le \lambda_{k+1}( \mathcal{J}_H) \le \frac{9}{4} \cdot \lambda_{k+1}(\mathcal{J}_G),
	$$
	Hence, the  the top $n-k$ eigenspaces of $ \mathcal{J}_G$ are preserved in $\mathcal{J}_H$. This proves the second statement of the theorem.

\section{Omitted Detail from Section~\ref{sec:directed}}

In this section we list all the proofs omitted from Section~\ref{sec:directed}.

 \begin{proof}[Proof of Lemma~\ref{lem:Fphi}]
  The proof follows from \cite{MS21}, which proves the result for undirected graphs. We include the proof here for completeness. 
  Let $S = A_1 \cup B_2$ in $H$, then 
  \begin{align}
      \begin{split}
          \phi_H(A_1 \cup B_2) = \phi_H(S) &= \frac{w_H(S,V\setminus S)}{\vol_H(S)}\\
          &= \frac{\vol_H(S) - 2w_H(S,S)}{\vol_H(S)}\\
          &= 1 - \frac{2w_H(S,S)}{\vol_H(S)}\\
          &= 1 - \frac{2{w}_{\overrightarrow{G}}(A,B)}{\vol_{\out}(A) + \vol_{\ino}(B)}\\
          &= f_{\overrightarrow{G}}(A,B).
      \end{split}
  \end{align}
  This proves the first statement of the lemma. The second statement of the lemma follows by the similar argument.
 \end{proof}

\begin{proof}[Proof of Lemma~\ref{lem:directed_reduction}]
By definition, we   have that 
  	\begin{equation}\label{eqn:Fphi}
		f_{\overrightarrow{G}}(A,B) = 1 - \overline{\phi}_{\overrightarrow{G}}(A,B),
	\end{equation}
and this implies that 
	\begin{align}
	\bar{\rho}_{\overrightarrow{G}}(k) &= \max_{(A_1,B_1),\ldots,(A_k,B_k)} \min_{1 \le i \le k} \overline{\phi}_{\overrightarrow{G}} (A_i,B_i)   \nonumber \\
		&= \max_{(A_1,B_1),\ldots,(A_k,B_k)} \min_{1 \le i \le k} \left(1 -  f_{\overrightarrow{G}}(A_i,B_i)\right)  \nonumber \\
		&= 1 -  \min_{(A_1,B_1),\ldots,(A_k,B_k)} \max_{1 \le i \le k}   f_{\overrightarrow{G}}(A_i,B_i)  \nonumber  \\
		&= 1 -  \min_{C_1,\ldots,C_k} \max_{1 \le i \le k} \phi_H(C_i),   \nonumber 
	\end{align}
where the second line follow by \eqref{eqn:Fphi}, and the last one follows by Lemma~\ref{lem:Fphi} and $C_i = A_{i_1} \cup B_{i_2}$. 
\end{proof}

\end{document}